\documentclass[11pt,a4paper,twoside]{article}
\usepackage[centertags]{amsmath}%ª
\usepackage{color}
\usepackage{amsthm,enumerate}
\usepackage[psamsfonts]{amsfonts,amssymb}
\usepackage{booktabs} 
\usepackage[utf8]{inputenc}

\DeclareMathOperator{\Tr}{Tr}

\newcommand{\SU}{\mathrm{SU}}

\newcommand{\EE}[1]{\mathbb E\left[#1\right]} 
\newcommand{\mi}{\mathrm{i}} %unitˆ immaginaria
\newcommand{\de}{\mathrm{d}} %differenziale
\newcommand{\kb}[2]{\left| #1 \right\rangle \! \left\langle #2 \right|} %ket bra
\newcommand{\ket}[1]{\left| #1 \right\rangle} %ket 
\newcommand{\bra}[1]{\langle #1 |} % bra
\newcommand{\scalar}[2]{\langle #1 | #2\rangle} %prodotto scalare
\newcommand{\braket}[1]{\langle #1 \rangle} % bra ket

\newtheorem{definition}{Definition}
\newtheorem{proposition}[definition]{Proposition}
\newtheorem{lemma}[definition]{Lemma}
\newtheorem{corollary}[definition]{Corollary}
\newtheorem{rem}[definition]{Remark}
\newenvironment{remark}{\begin{rem}  \rm }{\end{rem}}
\newtheorem{rems}[definition]{Remarks}

\newtheorem{example}{Example}

\newtheorem{theorem}{Theorem}

\begin{document}

\title{Quantum game theory for 2$\times$2 games: a mathematical framework}

\author{Gloria Ferraris$^{(1)}$ and Veronica Umanit\`a$^{(2)}$}
\maketitle

\begin{abstract}
We develop a rigorous mathematical framework for quantum game theory applied to static $2\times2$ games, extending classical concepts to the quantum setting where players may employ arbitrary unitary operations (pure strategies) or probability measures over the continuous group
$\mathrm{SU}(2)$ (mixed strategies). We introduce the Eisert-Wilkens-Lewenstein protocol as the standard implementation of such games and prove the existence of Nash equilibria for continuous quantum mixed strategies via a fixed-point argument, extending the classical Nash theorem to the quantum case. 
\end{abstract}

Keywords: Quantum Games, Nash equilibria.

\vskip 1truecm

\section{Introduction}
Game theory is a fundamental mathematical tool for studying strategic interactions among rational agents, with applications ranging from economics to evolutionary biology, and from political science to computer science. First formalized by von Neumann and Morgenstern \cite{von_neumann_morgenstern} and subsequently expanded by the concept of the Nash equilibrium \cite{nash_equilibrium}, classical game theory provides a solid framework for analyzing competitive and cooperative situations in which an agent's decisions influence and are influenced by those of others.

However, since the fundamental laws governing our world are quantum in nature, one naturally wonders what would happen if players had access to quantum resources. Classical theory, in fact,
has certain inherent limitations that have led, since the late 1990s, to the development of a theory of
quantum games, in which players' pure strategies are represented by unitary transformations on quantum
states and the correlations between them can be enhanced by two fundamental resources: superposition
and entanglement. Building on the seminal works \cite{meyer_penny}, \cite{eisert_2009}, \cite{marinatto_weber}, several approaches to extending arbitrary finite $n$-players games to the quantum domain have been proposed (see e.g. \cite{flitney_abbott}, \cite{Jiangfeng_Du_2003}, \cite{bleiler}, \cite{lee}, \cite{chappell}, \cite{das2023}).

The most general and established framework for the quantum implementation of static,  two-players two-strategies games is the Eisert-Wilkens-Lewenstein (EWL) protocol \cite{eisert_quantum}, which introduces quantum correlations between the initial states of the players through a unitary entangling operator. This scheme was subsequently extended to two-players games with $n$ strategies (see \cite{Hayden}), and to multiplayer games with two strategies (see \cite{bolonek1}). 

One of the central problems in game theory is to determine Nash equilibria, that is the optimal outcomes of a game. 
In the context of EWL games, it has been shown that, depending on the strategic space available to each player and the degree of entanglement introduced, new Nash equilibria can emerge that are more advantageous for both players (see e.g. \cite{eisert_2009}, \cite{eisert_quantum}, \cite{marinatto_weber}, \cite{flitney_hollenberg}). In general, as in the classical case, Kakutani's Theorem guarantees the existence of at least a Nash equilibrium for any static quantum game when both players have access to discrete mixed quantum strategies \cite{lee}. However, it is more natural, and mathematically richer, to allow players access to a full set of quantum moves, i.e. to the set of probability measures over the unitary group of pure strategies. For maximally entangled EWL games, Landsburg \cite{landsburg} proved that any mixed strategy is equivalent to one supported on at most four pure strategies, thereby reducing the problem to the discrete case i.e. to a discrete mixed strategies (see also \cite{bolonek1} and \cite{bolonek2}). Yet, when the entanglement is not maximal, no general existence result for Nash equilibria in the full continuous strategy space has been established. This gap constitutes the main motivation for the present work.
\subsection{Contributions and Outlines}

The aim of this paper is to provide a unified and rigorous treatment of quantum game theory for $2\times 2$ games, with a particular focus on mathematical formalization and the search for strategic equilibria in a fully continuous setting. Our main contributions are the following:

\begin{itemize}
\item We introduce a definition of continuous quantum mixed strategies as arbitrary probability measures on the entire group $\SU(2)$, encompassing both discrete and genuinely continuous mixtures of unitary operations.
\item We prove the existence of Nash equilibria when both players have access to such continuous mixed strategies (Theorem \ref{esistenza}). This result holds for every $2\times 2$ game with continuous payoff functions and is independent of the specific quantization protocol or the degree of entanglement. In particular, it can be applied to \emph{all} EWL games (see Theorem \ref{esistenza-EWL}).
\item We introduce analytic tools, including a quadratic-form representation of payoffs (see \cite{Jiangfeng_Du_2003}), that facilitate the explicit computation and analysis of equilibria within the EWL framework.
\end{itemize} 

The structure of the paper is as follows. Section $2$ outlines the formalism of classical \(2 \times 2\) game theory and introduces the concept of a Nash equilibrium. Section $3$ shows how this classical framework admits a natural quantum generalization, in which the strategy space is extended from discrete sets to the unitary group $\mathrm{SU(2)}$. We then define quantum mixed strategies as probability measures on $\mathrm{SU(2)}$ and introduce the corresponding notion of quantum equilibrium. Section 4 provides an existence theorem for Nash equilibria in continuous quantum mixed strategies, generalising the classical result. Section $5$ presents the Eisert-Wilkens-Lewenstein protocol \cite{eisert_quantum} and proves the existence of Nash equilibria within this framework (Theorem \ref{esistenza-EWL}).

Finally, in Section $5.4$ we also analyse an asymmetric scenario of the Chicken (Hawk-Dove) game, demonstrating that a quantum player can always achieve a higher payoff than a classical opponent by suitably choosing their quantum strategy.

\section{Classical framework for 2\,x\,2 games}
 Game theory is the branch of mathematics that studies strategic interactions between rational agents, called \emph{players}. The possible actions of players are called \emph{moves}, and a prescription that specifies the particular move to be made in all possible game situations is a \emph{strategy}. A \emph{pure strategy} is a deterministic choice of a single action. In particular, in a static game, i.e., where each player makes a single, simultaneous decision without knowing others' choices, a pure strategy coincides with a single move.

We focus on the simplest non-trivial setting: two-player games where each player has exactly two pure  strategies (\emph{$2\times 2$ game}).  
Let \(S=\{s_1,s_2\}\) be the strategy set of player \(A\) and \(T=\{t_1,t_2\}\) that of player \(B\).  
A pair \((s,t)\in S\times T\) is called a \emph{strategy profile}.

Each player has a \emph{payoff function}
\[
a:S\times T\to\mathbb{R},\qquad b:S\times T\to\mathbb{R},
\]
where \(a(s,t)\) (resp. \(b(s,t)\)) is the payoff of player \(A\) (resp. \(B\)) when \(A\) chooses \(s\) and \(B\) chooses \(t\).  
The game can be represented by a \(2\times2\) bimatrix
\[
\begin{array}{c|cc}
& t_1 & t_2 \\ \hline
s_1 & (a_{11},b_{11}) & (a_{12},b_{12})\\
s_2 & (a_{21},b_{21}) & (a_{22},b_{22})
\end{array}
\]
with \(a_{ij}=a(s_i,t_j),\; b_{ij}=b(s_i,t_j)\).

\begin{definition}[Mixed strategy]
A \emph{mixed strategy} for player \(A\) is a probability vector \((p,1-p)\) with \(p\in[0,1]\), meaning that \(s_1\) is chosen with probability \(p\) and \(s_2\) with probability \(1-p\).  
Analogously, a mixed strategy for player \(B\) is \((q,1-q)\) with \(q\in[0,1]\).
\end{definition}
In other words, the mixed strategy $(p,1-p)$ for $A$ is the probability measure $\mu_{A,p}$ on $S$ given by 
$$\mu_{A,p}=p\delta_{s_1}+(1-p)\delta_{s_2},$$ 
while the mixed strategy $(q,1-q)$ for B corresponds to 
$$\mu_{B,q}=q\delta_{t_1}+(1-q)\delta_{t_2}.$$
We indicate by $M_S$ and $M_T$ the sets of mixed strategies for $A$ and $B$ respectively, and we will use any of the following notations interchangeably
\begin{align*}
    M_S&=\{\mu_{A,p}\,|\, p\in[0,1]\}=\{(p,1-p)\,|\,p\in[0,1]\},\\ M_T&=\{\mu_{B,q}\,|\, q\in[0,1]\}=\{(q,1-q)\,|\,p\in[0,1]\}.
\end{align*}
Clearly, pure strategies can be considered as mixed strategies in which the probability is $0$ or $1$, i.e. $S=\{\delta_{s_1}, \delta_{s_2}\}\subseteq M_S$ and $T=\{\delta_{t_1},\delta_{t_2}\}\subseteq M_T$.

In this way, by independently setting two mixed strategies $\mu_{A,p}\in M_S$ and $\mu_{B,q}\in M_T$, we may think of players $A$ and $B$ as discrete independent random variables (which we denote by $A_p$ and $B_q$) with values in $S$ and $T$ respectively, having $\mu_{A,p}$ and $\mu_{B,q}$ as distribution laws. Consequently, the payoff functions become the random variables $a(A_p,B_q)$ and $b(A_p,B_q)$, and we are interested in determining their expected values
\begin{align*}
\EE{a(A_p,B_q)}&= p\,q\,a_{11} + p(1-q)a_{12} + (1-p)q\,a_{21} + (1-p)(1-q)a_{22}\\
&=\sum_{i,j=1,2}a(s_i,t_j)\mu_{A,p}(\{s_i\})\mu_{B,q}(\{t_j\})=:E_A(p,q)
\end{align*}
and 
\begin{align*}
\EE{b(A_p,B_q)}&=p\,q\,b_{11} + p(1-q)b_{12} + (1-p)q\,b_{21} + (1-p)(1-q)b_{22}\\
&=\sum_{i,j=1,2}b(s_i,t_j)\mu_{A,p}(\{s_i\})\mu_{B,q}(\{t_j\})=:E_B(p,q).
\end{align*}

\begin{definition}[Dominant strategy]
A pure strategy \(s^*\in S\) is \emph{dominant} for player \(A\) if, for every \(t\in T\),
\[
a(s^*,t)\ge a(s,t)\quad\forall s\in S.
\]
If the inequality is strict for at least one \(t\), the strategy is \emph{strictly dominant}.  
The definition for player \(B\) is analogous.
\end{definition}

\begin{definition}[Nash equilibrium]
A strategy profile $(p^*,q^*)\in M_S\times M_T$ is a \emph{Nash equilibrium} (NE) if no player can improve his payoff by unilaterally deviating, i.e.
\[
E_A(p^*,q^*)\ge E_A(p,q^*)\ \ \forall p\in[0,1],\qquad 
E_B(p^*,q^*)\ge E_B(p^*,q)\ \ \forall q\in[0,1].
\]
\end{definition}

A convenient characterisation of mixed-strategy equilibria uses the notion of \emph{indifference}:  
in equilibrium each player's mixing must make the opponent indifferent between his two pure strategies.  
Formally, \((p^*,q^*)\) is a mixed-strategy NE only if
\[
E_A(0,q^*)=E_A(1,q^*),\quad 
E_B(p^*,0)=E_B(p^*,1).
\]  
Explicitly,
\[
\left\{\begin{array}{l}q^*a_{11}+(1-q^*)a_{12}=q^*a_{21}+(1-q^*)a_{22},\\
p^*b_{11}+(1-p^*)b_{21}=p^*b_{12}+(1-p^*)b_{22}.
\end{array}\right.
\]

\begin{definition}[Pareto optimality]
A profile \((s^*,t^*)\) is \emph{Pareto optimal} if there is no other profile \((s,t)\) such that
\[
a(s,t)\ge a(s^*,t^*),\quad b(s,t)\ge b(s^*,t^*),
\]
with at least one inequality strict.  
Equivalently, one player cannot be made better off without making the other worse off.
\end{definition}

\section{Quantum model for 2\,x\,2 games}

The classical model of a game with two (pure) strategies has only two possible configurations (or states), and players can only leave them unchanged or flip them. Identifying these configurations with the canonical orthonormal basis $\ket{0}$, $\ket{1}$ of $\mathbb{C}^2$, classical pure strategies correspond to permutations of these states, i.e. to the $2\times 2$ matrices 
$$I=\left(\begin{array}{cc}
1&0\\
0&1
\end{array}\right),\qquad F=\left(\begin{array}{cc}
0&1\\
1&0
\end{array}\right),
$$
while the mixed strategies $(p,1-p)$ are described by convex combinations $pI+(1-p)F$ for $p\in[0,1]$.

In a quantum extension of the game, the system is no longer confined to a definite classical state but may exist in a superposition of states. A general quantum state of a single two-level system (a qubit) can be written as a vector $\ket{v}=a\ket{0}+b\ket{1}$ with $a,b\in\mathbb{C}$ such that $|a|^2+|b|^2=1$, where $|a|^2$ and $|b|^2$ represent the probabilities that a measurement will yield the outcomes corresponding to $\ket{0}$ and $\ket{1}$, respectively. In an equivalent way, a qubit $\ket{v}$ can be described by the pure-state density operator $$\rho_v:=\kb{v}{v}=|a|^2\kb{0}{0}+a\overline{b}\kb{0}{1}+b\overline{a}\kb{1}{0}+|b|^2\kb{1}{1}.$$
If we perform a projective measurement in the computational basis $\{\ket{0},\ket{1}\}$, described by the PVM $\{P_0, P_1\}$, $P_i=\kb{i}{i}$ for $i=0,1$, the probability of outcome $i$ is 
$$p_v(i)=\Tr(\rho_v P_i)=|\scalar{i}{v}|^2.$$ We also refer to this measurement as the measurement on the \emph{computational basis} $\{\ket{0}, \ket{1}\}$.

Assuming that players can manipulate their qubits without introducing decoherence, quantum pure strategies correspond to unitary operations on $\mathbb{C}^2$. However, since global phases have no observable effect on quantum states, it suffices to consider unitary matrices with determinant one, i.e. elements of the group $\SU(2)$.
Therefore, the action of a single player on the initial state $\ket{v}$ is given by $U\ket{v}$ for a suitable $U\in\SU(2)$.

Thus, the space of \emph{quantum pure strategies} is continuous, can be parameterised as
\[\left\{
U(\theta,\alpha,\beta)=
\begin{pmatrix}
e^{\mi\alpha}\cos\frac\theta2 & \mi e^{\mi\beta}\sin\frac\theta2\\
\mi e^{-\mi\beta}\sin\frac\theta2 & e^{-\mi\alpha}\cos\frac\theta2
\end{pmatrix}\,|\, \theta\in[0,\pi],\; \alpha,\beta\in[0,2\pi)\right\},
\]
and $U(\theta,\alpha,\beta)^\dagger=U(\theta,\alpha,\beta)^{-1}=U(\theta,2\pi-\alpha,\pi+\beta)$.\\

Using the expansion in the Pauli basis $I,\mi\sigma_x,\mi\sigma_y,\mi\sigma_z$, 
$$\sigma_x=\left(\begin{array}{cc}
0&1\\
1&0
\end{array}\right),\quad \sigma_y=\left(\begin{array}{cc}
0&-\mi\\
\mi&0
\end{array}\right),\quad \sigma_z=\left(\begin{array}{cc}
1&0\\
0&-1
\end{array}\right),$$
every $U\in\SU(2)$ can be written uniquely as
\[
U = w I + \mi x \sigma_x + \mi y \sigma_y + \mi z \sigma_z
\]
with $w,x,y,z\in\mathbb{R}$ such that $w^2+x^2+y^2+z^2=1$.
Comparing with the parameterisation above gives the explicit correspondence
\[
w = \cos\alpha\cos\frac\theta2,\quad
x = \cos\beta\sin\frac\theta2,\quad
y = -\sin\beta\sin\frac\theta2,\quad
z = \sin\alpha\cos\frac\theta2.
\]
Thus, each $U\in\SU(2)$ is uniquely associated with a unit vector
\[
u = (w,x,y,z) \in \mathbb{R}^4,
\qquad \|u\|=1,
\]
which provides a convenient real-parameter description of single-qubit strategies in quantum games.

\medskip

\begin{remark}
\rm    
Pure classical strategies correspond (up to a phase) to special cases of the quantum parameterisation:
\[
I = U(0,0,0), \qquad F = \mi\sigma_x = U(\pi,0,0).
\]
\end{remark} 

Mixed classical strategies can also be seen as special cases of pure quantum strategies.
Indeed, if we start from the initial state \(\ket{v}=a\ket{0}+b\ket{1}\) with \(a,b\in\mathbb{R}\), the mixed strategy $(p,1-p)$ produces $\ket{v}$ with probability $p$ and $\ket{v_F}:=F\ket{v}=a\ket{1}+b\ket{0}$ with probability $1-p$. Therefore, the corresponding state is the mixture 
$$\rho_{\text{c}}:=p\ket{v}\!\bra{v}+(1-p)\ket{v_F}\!\bra{v_F},$$
and the probability of outcome \(\ket{0}\) is then
\(p_{\text{c}}(0)=p|a|^{2}+(1-p)|b|^{2}\).  

On the other hand, applying the quantum operator \(U(\theta,0,0)\) to $\ket{v}$ yields the pure state  
$$\ket{w}=U(\theta,0,0)\ket{v}=\bigl(a\cos\frac{\theta}{2}+\mi b\sin\frac{\theta}{2}\bigr)\ket{0}
       +\bigl(b\cos\frac{\theta}{2}+\mi a\sin\frac{\theta}{2}\bigr)\ket{1},
$$
so that  
\(p_{\text{q}}(0)=|\braket {0|w}|^{2}
 =|a|^{2}\cos^{2}\frac{\theta}{2}+|b|^{2}\sin^{2}\frac{\theta}{2}
  +2\Re\!\bigl(\mi a\overline{b}\cos\frac{\theta}{2}\sin\frac{\theta}{2}\bigr)\).  
Since \(a,b\in\mathbb{R}\), the interference term vanishes, and with the identification  
$$p=\cos^{2}(\theta/2),\ 1-p=\sin^{2}(\theta/2)$$ we obtain \(p_{\text{q}}(0)=p_{\text{c}}(0)\).  
The same holds for outcome \(\ket{1}\); hence, we have proven the following result.

\begin{proposition}
Whenever the initial state has real coefficients, the quantum pure strategy 
\[
U(\theta,0,0)=\cos\frac\theta2\,I + \mi\sin\frac\theta2\,\sigma_x,\qquad\theta\in(0,\pi)
\]
reproduces exactly the statistics of the classical mixed-strategy $(p,1-p)$ with $p:=\cos^2\frac\theta2$.
\end{proposition}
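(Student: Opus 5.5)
The plan is to compare the two outcome distributions directly, since each is determined by the probability of outcome $\ket{0}$. Fix an initial state $\ket{v}=a\ket{0}+b\ket{1}$ with $a,b\in\mathbb{R}$ and $a^2+b^2=1$. For a given $\theta\in(0,\pi)$, set $p:=\cos^2\frac\theta2\in(0,1)$; this is a legitimate mixed strategy, and $1-p=\sin^2\frac\theta2$.

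First I compute the classical side. The mixed strategy $(p,1-p)$ produces the density operator $\rho_{\mathrm c}=p\kb{v}{v}+(1-p)\kb{v_F}{v_F}$. The Born rule $\Tr(\rho P_i)$ is linear in $\rho$, which gives
\[
p_{\mathrm c}(0)=p\,a^2+(1-p)\,b^2,\qquad p_{\mathrm c}(1)=p\,b^2+(1-p)\,a^2 .
\]

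Next I compute the quantum side. Using the matrix form $U(\theta,0,0)=\cos\frac\theta2 I+\mi\sin\frac\theta2\sigma_x$, applying it to $\ket{v}$ gives the amplitudes $a\cos\frac\theta2+\mi b\sin\frac\theta2$ on $\ket{0}$ and $b\cos\frac\theta2+\mi a\sin\frac\theta2$ on $\ket{1}$. Taking squared moduli produces the diagonal terms $a^2\cos^2\frac\theta2+b^2\sin^2\frac\theta2$ (respectively $b^2\cos^2\frac\theta2+a^2\sin^2\frac\theta2$), plus a cross term $2\Re\bigl(\mi\,a\overline{b}\cos\frac\theta2\sin\frac\theta2\bigr)$.

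The only step needing care is showing that this interference term vanishes; it is where the hypothesis of real coefficients is used. Since $a,b\in\mathbb{R}$, the product $a\overline{b}=ab$ is real. Multiplying by $\mi$ makes it purely imaginary, so its real part is zero, and the same holds for the $\ket{1}$ amplitude. Substituting $\cos^2\frac\theta2=p$ and $\sin^2\frac\theta2=1-p$ then gives $p_{\mathrm q}(i)=p_{\mathrm c}(i)$ for $i=0,1$.

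Both distributions are probability distributions on the two outcomes of the computational-basis measurement, so they agree entirely. Hence the statistics coincide. I would add a remark that without real coefficients the cross term $-2ab\,\Im(\cdot)$-type contribution generally survives, which shows the hypothesis is needed.
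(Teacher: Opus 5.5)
Your proposal is correct and follows the paper's own argument step by step. You compute $p_{\mathrm c}(i)$ from $\rho_{\mathrm c}$, expand $|\langle i|U(\theta,0,0)v\rangle|^2$, use $a,b\in\mathbb{R}$ to eliminate the interference term, and then substitute $p=\cos^2\frac\theta2$ (a sign slip shared with the paper: the cross term is actually $2\Re\bigl(-\mi\,a\overline{b}\bigr)\cos\frac\theta2\sin\frac\theta2$, which is harmless because it vanishes anyway).
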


This shows that the quantum framework naturally contains the classical one as a proper subset.

\medskip

In full analogy with classical game theory, one can define \emph{quantum mixed strategies} as probabilistic choices among unitary operators. 
Because the set of pure strategies \(\SU(2)\) is a compact continuous group, a mixed strategy is formally a probability measure \(\mu\) on the Borel \(\sigma\)-algebra of \(\SU(2)\).  \\
We denote by $\mathcal{M}$ the space of all such strategies,
\begin{equation}\label{def-M}
\mathcal{M} = \bigl\{\mu:\mathcal{B}(\SU(2))\to[0,1] \mid \mu \text{ is a regular Borel probability measure}\bigr\}.
\end{equation}

A practically important subclass consists of \emph{discrete mixed strategies}, where a player selects a finite set \(\{U_1,\dots,U_N\}\subset\SU(2)\) with probabilities \(p_1,\dots,p_N\) (\(p_k\ge0,\ \sum_k p_k=1\)).
Such a strategy corresponds to the Dirac-measure combination $$\mu = \sum_k p_k\delta_{U_k},$$
and induces the quantum channel (i.e. the completely positive and trace-preserving map) on $M_2(\mathbb{C})$
$$\Phi_\mu(\rho)=\sum_k p_k\,U_k\rho U_k^*.$$

Thus, the formalism of density matrices and quantum channels provides a unified description covering both classical probabilistic mixtures and genuinely quantum superpositions.
\smallskip\\

In a static quantum $2\times 2$ game, players $A$ and $B$ each control a qubit and they act simultaneously on it, so that the joint state lies in $\mathbb{C}^2\otimes\mathbb{C}^2$. Adapting from \cite{eisert_2009}, we can then introduce the following definition.

\begin{definition}
A \emph{static $2\times 2$ quantum game} $\mathbf{G}$ is a tuple
\[
\mathbf{G}= (\mathbb{C}^2\otimes\mathbb{C}^2, \rho, S_A, S_B, P_A, P_B),
\]
where:
\begin{itemize}
\item[-] $\mathbb{C}^2\otimes\mathbb{C}^2$ is the Hilbert space describing the joint physical system of the two players,
\item[-] $\rho \in \mathcal{S}(\mathbb{C}^2\otimes\mathbb{C}^2)$ is the initial state of the system,
\item[-] $S_A, S_B \subseteq \mathrm{SU}(2)$ are the sets of pure strategies available to players $A$ and $B$, respectively,
\item[-] the \emph{payoff operators} $P_A, P_B$ are self-adjoint operators on $\mathbb{C}^2\otimes\mathbb{C}^2$ associated with the measurement performed. They specify the utility for each player, and depend on the classical payoffs $a$ and $b$, respectively. More precisely,
\begin{equation}\label{def-PX} P_X=\sum_{i,j=0,1}x_{ij}|\psi_{ij}\rangle\langle\psi_{ij}|
\end{equation}
with \(x_{ij}\) the classical payoffs of player $X=A,B$, and $\{|\psi_{ij}\rangle\langle\psi_{ij}|\,|\,i,j=0,1\}$ a PVM.
\end{itemize}

If player \(A\) uses the pure strategy \(U_A\in S_A\) and player \(B\) uses \(U_B\in S_B\), the final state of the system is
$$\rho_f(U_A,U_B):=(U_A\otimes U_B)\rho(U_A\otimes U_B)^\dagger=(U_A\otimes U_B)\rho(U_A^\dagger\otimes U_B^\dagger).$$

\end{definition}
\smallskip

The choice of the initial state $\rho$ and of the PVM $\{|\psi_{ij}\rangle\langle\psi_{ij}|\,|\,i,j=0,1\}$ depends on the protocol used to quantize the game, while $S_A$ and $S_B$ are determined by the quantum operations to which the players have access. For example, in \cite{eisert_2009} and \cite{eisert_quantum} the Prisoner's Dilemma was analysed considering different strategic spaces, including $\{U(\theta,0,0)\,|\,\theta\in[0,\pi]\}$ and $\{U(\theta,\alpha,0)\,|\,\theta\in[0,\pi], \alpha\in[0,2\pi)\}$. However, many authors (see e.g. \cite{benjamin} and \cite{flitney_hollenberg}) have pointed out that restricting the strategy space to proper subsets of $\SU(2)$ represents a serious limitation and is not physically justifiable in the absence of a physical motivation. Therefore, in the following, by a static $2\times2$ game we will always mean a game with pure strategy spaces $S_A=S_B=\SU(2)$. The results obtained are, however, immediately generalisable to the case where $S_A$ and $S_B$ are locally compact subsets of $\SU(2)$ (see Remark 4.3).

\subsection{Quantum expected payoff}
Let $\mathbf{G}= (\mathbb{C}^2\otimes\mathbb{C}^2, \rho, \SU(2), \SU(2), P_A, P_B)$ be a
static $2\times 2$ quantum game, with $P_A$ and $P_B$ given by \eqref{def-PX}. We denote by $\mathcal{M}_A$ and $\mathcal{M}_B$ the sets of associated mixed quantum strategies, with $\mathcal{M}_A=\mathcal{M}_B=\mathcal{M}$ defined in \eqref{def-M}.

From now on, by a static $2\times 2$ quantum game we will always mean the game $\mathbf{G}$ as defined above.
\begin{definition}
If player \(A\) uses the pure strategy \(U_A\in\SU(2)\) and player \(B\) uses \(U_B\in\SU(2)\), the \emph{expected payoff} for player \(X\) (\(X=A,B\)) is
\begin{align*}
\langle\$_X(U_A,U_B)\rangle &= \Tr\bigl(\rho_f(U_A,U_B)\,P_X\bigr)\\
%&=\sum_{i,j\in\{0,1\}} x_{ij}\Tr\bigl(\rho|(U_A\otimes U_B)^\dagger\psi_{ij}\rangle\langle(U_A\otimes U_B)^\dagger\psi_{ij}|\bigr)\\
&=\sum_{i,j\in\{0,1\}} x_{ij}\langle\psi_{ij}|(U_A\otimes U_B)\rho(U_A\otimes U_B)^\dagger|\psi_{ij}\rangle\\
&=\sum_{i,j\in\{0,1\}} x_{ij}\|\rho^\frac{1}{2}(U_A\otimes U_B)^\dagger|\psi_{ij}\rangle\|^2.
\end{align*}
\end{definition}

Hence, the quantum payoff is the expectation value of the observable \(P_X\) in the final state, i.e. a weighted average of the classical payoffs with weights given by the probabilities of observing the outcomes \((i,j)\) when measuring in the basis $\ket{\psi_{ij}}$, $i,j=0,1$.

If the final state is pure, \(\rho_f = |v_f\rangle\langle v_f|\), the probabilities become \(|\langle \psi_{ij}|v_f\rangle|^2\) and
\begin{equation}\label{eq:payoff-puro}
\langle\$_X(U_A,U_B)\rangle = \sum_{i,j} x_{ij}\,|\langle \psi_{ij}|v_f\rangle|^2.
\end{equation}

The definition extends naturally to mixed strategies.  
\begin{definition}
If \(\mu_A,\mu_B\in\mathcal{M}\) are quantum mixed strategies, the expected payoff for player \(X=A,B\) is
\[
\langle\$_X(\mu_A,\mu_B)\rangle = 
\iint_{\SU(2)\times\SU(2)} \Tr\!\bigl(\rho_f(U_A,U_B)P_X\bigr)\,
\de\mu_A(U_A)\,\de\mu_B(U_B).
\]
\end{definition}
This formula covers all cases: pure strategies (Dirac measures), discrete mixed strategies (finite convex combinations), and general mixed strategies. For discrete mixed strategies we will use the notations
$$\mu_X=\sum_{j=1}^Nq_j\delta_{U_j}\quad\mbox{or}\quad\mu_X=\sum_{j=1}^Nq_j{U_j}$$ interchangeably.
\begin{remark}
    \rm
    Since $P_X$ is a self-adjoint operator having $x_{ij}$ as eigenvalues, its norm is given by $\|P_X\|=\max_{i,j=0,1}|x_{ij}|$. As a consequence we have
    $$|\langle\$_X(U_A,U_B)\rangle|\leq\|\rho_f(U_A,U_B)\|_1\,\|P_X\|=\max_{i,j=0,1}|x_{ij}|,$$
    and then the expected payoff is uniformly bounded on $\mathcal{M}$,
    $$|\langle\$_X(\mu_A,\mu_B)\rangle|\leq \max_{i,j=0,1}|x_{ij}|,
    $$
    being $\mu_A$ and $\mu_B$ probability measures.\end{remark}

Finally, we introduce the \emph{best-response correspondence}, $$\mathcal{B}:\mathcal{M}_A\times\mathcal{M}_B\to 2^{\mathcal{M}_A\times\mathcal{M}_B},\ \qquad
\mathcal{B}(\mu_A,\mu_B) = \mathcal{B}_A(\mu_B) \times \mathcal{B}_B(\mu_A),
$$
which maps an opponent's strategy to the set of strategies that maximise one's own payoff.
For player \(A\) the correspondence is
\[
\mathcal{B}_A(\mu_B)=\Bigl\{
\mu_A\in\mathcal{M}\; \Big|\; 
\langle\$_A(\mu_A,\mu_B)\rangle = \sup_{\nu_A\in\mathcal{M}}\langle\$_A(\nu_A,\mu_B)\rangle
\Bigr\},
\]
and symmetrically for player \(B\).

\section{Quantum Nash equilibrium}
The definition of Nash equilibrium extends naturally to the quantum case. We will denote by $\mathcal{M}_X$ the space of quantum mixed strategies of player $X$ for $X=A,B$.
\begin{definition}
A pair $(\mu_A^\star, \mu_B^\star) \in \mathcal{M}_A \times \mathcal{M}_B$ is a \emph{quantum Nash equilibrium} if
\[
\begin{cases}
\langle\$_A(\mu_A^\star, \mu_B^\star)\rangle \geq \langle\$_A(\mu_A, \mu_B^\star)\rangle \quad \forall \mu_A \in \mathcal{M}_A, \\
\langle\$_B(\mu_A^\star, \mu_B^\star)\rangle \geq \langle\$_B(\mu_A^\star, \mu_B)\rangle \quad \forall \mu_B \in \mathcal{M}_B.
\end{cases}
\]
\end{definition}
Similarly to the classical case, it is clear that Nash equilibria are precisely the fixed points of $\mathcal{B}$, i.e. the pairs $(\mu_A^\star, \mu_B^\star)$ such that $(\mu_A^\star, \mu_B^\star) \in \mathcal{B}(\mu_A^\star, \mu_B^\star)$.

In order to prove the existence of such points, we will make use of the following result that extends Kakutani's fixed point Theorem to multifunctions on locally convex spaces (see \cite{arandelovic2000}).
\begin{theorem}[Kakutani-Glicksberg-Ky Fan]\label{teo_kgk}
Let $X$ be a $\text{T}_2$ locally convex topological vector space and $K \subseteq X$ a non empty, compact and convex subset. If $F: K \to 2^K$ is a multifunction such that:
\begin{enumerate}
    \item the graph $\mathcal{G}(F):=\{(x,y)\,|\, y\in F(x)\}$ is closed in $K\times K$,
    \item $F(x)$ is non empty and convex for all $x\in K$,
\end{enumerate}
then $\operatorname{Fix}F\neq \emptyset$, where $\operatorname{Fix}F$ denotes the set of fixed point of $F$.
\end{theorem}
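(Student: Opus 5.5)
The statement is a known result (cited from \cite{arandelovic2000}), so I would prove it by reduction to the finite-dimensional Kakutani theorem, using the local convexity of $X$ to approximate $F$ on finite-dimensional polytopes.

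\textbf{Step 1: preliminaries.} Since $\mathcal{G}(F)$ is closed and $K$ is compact, each $F(x)$ is closed in $K$, hence compact, and by assumption convex and non empty. Let $\mathcal{V}$ be the family of closed, convex, symmetric neighbourhoods of $0$. Because $X$ is locally convex, $\mathcal{V}$ is a base of neighbourhoods of $0$. For $V\in\mathcal{V}$ set
\[
E_V:=\{x\in K \,|\, x\in F(x)+V\}.
\]
The strategy is to show that each $E_V$ is non empty and closed, and that the family $\{E_V\}_{V\in\mathcal{V}}$ has the finite intersection property. The last point is immediate: for $V_1,\dots,V_n\in\mathcal{V}$ we have $V_1\cap\dots\cap V_n\in\mathcal{V}$, and $E_{V_1\cap\dots\cap V_n}\subseteq E_{V_1}\cap\dots\cap E_{V_n}$. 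Compactness of $K$ then gives some $x^\star\in\bigcap_V E_V$. For this point, $x^\star\in\bigcap_{V\in\mathcal{V}}(F(x^\star)+V)=\overline{F(x^\star)}=F(x^\star)$, where we use the Hausdorff property and that $F(x^\star)$ is closed. Hence $x^\star$ is a fixed point.

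\textbf{Step 2: closedness of $E_V$.} Take a net $x_\alpha\in E_V$ with $x_\alpha\to x$, and write $x_\alpha=z_\alpha+v_\alpha$ with $z_\alpha\in F(x_\alpha)$ and $v_\alpha\in V$. By compactness of $K$ we may pass to a subnet with $z_\alpha\to z\in K$. The closed graph gives $z\in F(x)$. Then $v_\alpha=x_\alpha-z_\alpha\to x-z$, and $x-z\in V$ because $V$ is closed. Therefore $x\in F(x)+V$.

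\textbf{Step 3: non emptiness of $E_V$, the main step.} By compactness, cover $K$ by finitely many sets $x_i+\operatorname{int}V$ with $x_i\in K$, $i=1,\dots,n$, and let $P=\mathrm{conv}\{x_1,\dots,x_n\}\subseteq K$. The set $P$ is a compact convex subset of a finite-dimensional subspace, where the Hausdorff vector topology is the Euclidean one. On $P$ define
\[
G(y):=P\cap (F(y)+V).
\]
\begin{itemize}
\item[(a)] $G(y)$ is non empty: pick $z\in F(y)\subseteq K$; then $z\in x_i+V$ for some $i$, and by symmetry of $V$ we get $x_i\in z+V\subseteq F(y)+V$, with $x_i\in P$.
\item[(b)] $G(y)$ is convex, as the intersection of convex sets.
\item[(c)] $G$ has closed graph: this follows by the same subnet argument as in Step 2.
\end{itemize}
The classical Kakutani theorem in finite dimensions (after identifying $P$ with a compact convex subset of $\mathbb{R}^m$) then yields $y\in G(y)$. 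In particular $y\in E_V$.

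The point I expect to need most care is Step 3: checking that the finite-dimensional Kakutani theorem applies, i.e. that the topology induced on $P$ is Euclidean and that the closed-graph property of $G$ is inherited correctly. Once the symmetric neighbourhood trick guarantees non emptiness of $G(y)$, the remaining steps are routine compactness arguments.
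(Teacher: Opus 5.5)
Your proof is correct, but the paper gives no proof of its own to compare it with: the statement is quoted from \cite{arandelovic2000} and used as a black box in the proof of Theorem~\ref{esistenza}. What you give is the standard reduction to the finite-dimensional Kakutani theorem, and all three steps go through. $E_V$ is closed by the subnet argument, which uses compactness of $K$, the closed graph of $F$ and closedness of $V$. $E_V$ is non-empty because $G(y)=P\cap(F(y)+V)$ on the polytope $P$ has three properties: a closed graph; convex values, which is exactly where convexity of $V$, i.e.\ local convexity of $X$, is indispensable; and non-empty values, which is where the symmetry of $V$ and the cover of $K$ by the translates $x_i+\operatorname{int}V$ enter. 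The finite intersection property together with $\bigcap_{V\in\mathcal{V}}(A+V)=\overline{A}$ then gives the fixed point. Compared with the bare citation, your route gives an argument that is self-contained modulo Kakutani in $\mathbb{R}^m$ and shows where each hypothesis is used. For instance, the Hausdorff assumption enters essentially in Step~3, to identify the topology of $\operatorname{span}\{x_1,\dots,x_n\}$ with the Euclidean one. In Step~1 you do not need it: closedness of $F(x^\star)$ relative to $K$ already suffices, since $x^\star\in K$. The citation gives brevity, which is all the paper needs, since the theorem is only an auxiliary tool there.
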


Our goal is to apply Theorem \ref{teo_kgk} to 
\begin{align*}
X &= M(\mathrm{SU(2)}) := \{ \text{Borel regolar measures on}\ \mathrm{SU(2)} \}\\
K &= \mathcal{M} = \{\mu \in M(\mathrm{SU(2)}) \mid \mu \geq 0,\ \mu(\mathrm{SU(2)}) = 1\}\\
F &= \mathcal{B} : \mathcal{M} \times \mathcal{M} \to 2^{\mathcal{M} \times \mathcal{M}}.
\end{align*}
Since $\mathrm{SU(2)}$ is a compact normed space, the Banach space $M(\mathrm{SU(2)})$ is isometrically isomorphic to the dual of continuous functions $\mathcal{C}(\mathrm{SU(2)})$ on $\mathrm{SU(2)}$, and so it is a $\text{T}_2$ locally convex topological vector space with respect to the weak$^*$ topology. Moreover, $\mathcal{M}$ is non empty (it contains measures $\delta_U$ for $U\in\mathrm{SU(2)}$), convex and weakly$^*$ compact by Banach-Alaoglu-Bourbaki Theorem. Therefore, in order to apply Theorem \ref{teo_kgk}, we only need $\mathcal{B}$ to satisfy conditions $1$ and $2$. 

\begin{remark}\rm
If the space of pure strategies $S_X$ is not all $\SU(2)$ but is locally compact, we can still define a mixed quantum strategy as a regular Borel probability measure on it, so that
$$\widetilde{\mathcal{M}}:=\{\mbox{quantum mixed strategies}\}\subseteq M(S_X).
$$
In this case, $M(S_X)$ is isometrically isomorphic to the topological dual space of $\mathcal{C}_0(S_X)$, the set of continuous functions on $S_X$ which vanish at infinity, and $\widetilde{\mathcal{M}}$ is still non empty, convex and weakly$^*$ compact.
\end{remark}
We introduce the following preliminary results.

\begin{lemma}\label{lem:cont}
The payoff function
\[
(U_A,U_B)\longmapsto\langle\$_{X}(U_A,U_B)\rangle
   =\sum_{i,j\in\{0,1\}} x_{ij}\|\rho^\frac{1}{2}(U_A\otimes U_B)^\dagger|\psi_{ij}\rangle\|^2
\]
is continuous on $\SU(2)\times\SU(2)$ for all $X=A,B$.
\end{lemma}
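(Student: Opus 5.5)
The plan is to express the payoff as a composition of continuous maps between finite-dimensional spaces. This avoids any delicate estimate and leaves only bookkeeping. We view $\SU(2)$ as a subset of $M_2(\mathbb{C})\cong\mathbb{C}^4$ with the subspace (norm) topology, and $\SU(2)\times\SU(2)$ with the product topology.

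\emph{Step 1.} The map $(U_A,U_B)\mapsto U_A\otimes U_B\in M_4(\mathbb{C})$ is continuous. In the computational basis, each entry of $U_A\otimes U_B$ is a product of one entry of $U_A$ and one entry of $U_B$, hence a polynomial in the entries. Equivalently, it is the restriction of a bilinear map on finite-dimensional spaces, which satisfies
$$\|U_A\otimes U_B-V_A\otimes V_B\|\le\|U_A-V_A\|\,\|U_B\|+\|V_A\|\,\|U_B-V_B\|.$$

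\emph{Step 2.} The map $W\mapsto W^\dagger$ is continuous, in fact isometric, on $M_4(\mathbb{C})$. For fixed vectors $\psi_{ij}$, the map $W\mapsto\rho^{\frac12}W^\dagger\ket{\psi_{ij}}$ is therefore continuous, since $\rho^{\frac12}$ is a fixed bounded operator. The square root exists because $\rho\ge0$ is a state.

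\emph{Step 3.} The norm-squared $v\mapsto\|v\|^2$ is continuous on $\mathbb{C}^4$. Hence each function $(U_A,U_B)\mapsto\|\rho^{\frac12}(U_A\otimes U_B)^\dagger\ket{\psi_{ij}}\|^2$ is continuous. The payoff is a finite real linear combination of these four functions, with coefficients $x_{ij}$, so it is continuous as well.

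Alternatively, one can use the trace form $\Tr\bigl((U_A\otimes U_B)\rho(U_A\otimes U_B)^\dagger P_X\bigr)$. It is a polynomial in the real parameters $(w,x,y,z)$ of $U_A$ and of $U_B$, which gives continuity immediately.

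There is no real obstacle. The only point requiring care is to fix the topology on $\SU(2)$: it is the one inherited from $M_2(\mathbb{C})$, equivalently the one from the unit sphere $S^3\subset\mathbb{R}^4$ via $u=(w,x,y,z)$. This is the topology under which $\mathcal{C}(\SU(2))$ and the weak$^*$ topology on $\mathcal{M}$ are meant. Compactness of $\SU(2)\times\SU(2)$ then upgrades this to uniform continuity, which is presumably what the subsequent lemmas will need.
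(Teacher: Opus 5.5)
Your proof is correct and is essentially the paper's argument: both show that $(U_A,U_B)\mapsto\rho^{\frac12}(U_A\otimes U_B)^\dagger\ket{\psi_{ij}}$ is continuous, then use continuity of the norm and of the finite combination with coefficients $x_{ij}$. The paper phrases this with sequences (operator-norm convergence $U_{X,n}\to U_X$ gives strong convergence of $\rho^{\frac12}(U_{A,n}\otimes U_{B,n})^\dagger$), whereas you write it as a composition of continuous maps and spell out the bilinear tensor-product estimate, which is the same route in more detail.
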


\begin{proof}
Convergences
$U_{A,n}\to U_A$ and $U_{B,n}\to U_B$ in operator norm imply
$\rho^\frac{1}{2}(U_{A,n}\otimes U_{B,n})^\dagger\to \rho^\frac{1}{2}(U_{A}\otimes U_{B})^\dagger$ in the strong operator topology, so that
$$\|\rho^\frac{1}{2}(U_{A,n}\otimes U_{B,n})^\dagger\ket{v}\|\to_n\|\rho^\frac{1}{2}(U_A\otimes U_B)^\dagger|v\rangle\|$$ for all $v\in\mathbb{C}^4$.
Choosing $v=\psi_{ij}$ gives the desired continuity.
\end{proof}

\begin{lemma}\label{contw}
The map
\begin{align*}
    \langle\$_X\rangle : \mathcal{M} \times \mathcal{M} &\to \mathbb{R}\\
    (\mu_A,\mu_B)&\mapsto\iint_{\mathrm{SU(2)}\times \mathrm{SU(2)}} \langle\$_X(U_A, U_B)\rangle \de\mu_A(U_A)\de\mu_B(U_B)
\end{align*}
is w$^*$-continuous for $X=A,B$.
\end{lemma}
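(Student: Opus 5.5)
The strategy is to reduce joint weak$^*$ continuity to the weak$^*$ continuity of the single measure $\mu_A\otimes\mu_B$ on the compact space $\SU(2)\times\SU(2)$, and then use density of sums of products of one-variable functions.

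Write $f(U_A,U_B):=\langle\$_X(U_A,U_B)\rangle$. By Lemma \ref{lem:cont}, $f\in\mathcal{C}(\SU(2)\times\SU(2))$. Since $\mathcal{M}$ is weak$^*$ compact but not metrizable a priori, I will work with nets. Let $(\mu_A^\lambda,\mu_B^\lambda)\to(\mu_A,\mu_B)$ in the product of the weak$^*$ topologies. The goal is to show
\[
\iint f\,\de\mu_A^\lambda\,\de\mu_B^\lambda\to\iint f\,\de\mu_A\,\de\mu_B .
\]
By Fubini this is $\int f\,\de(\mu_A^\lambda\otimes\mu_B^\lambda)$.

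\textbf{Step 1 (product test functions).} Take $f=g\otimes h$ with $g,h\in\mathcal{C}(\SU(2))$, meaning $f(U_A,U_B)=g(U_A)h(U_B)$. Then
\[
\int f\,\de(\mu_A^\lambda\otimes\mu_B^\lambda)=\Bigl(\int g\,\de\mu_A^\lambda\Bigr)\Bigl(\int h\,\de\mu_B^\lambda\Bigr).
\]
Each factor converges by the definition of weak$^*$ convergence, and multiplication in $\mathbb{R}$ is continuous, so the product converges. By linearity the same holds for every $f$ in the algebra $\mathcal{A}$ of finite sums $\sum_k g_k\otimes h_k$.

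\textbf{Step 2 (density).} $\mathcal{A}$ is a subalgebra of $\mathcal{C}(\SU(2)\times\SU(2))$. It contains constants and separates points, since coordinate functions of $U_A$ and of $U_B$ lie in it. By Stone--Weierstrass, $\mathcal{A}$ is dense in the sup norm.

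\textbf{Step 3 ($\varepsilon/3$ argument).} Given $\varepsilon>0$, choose $\phi\in\mathcal{A}$ with $\|f-\phi\|_\infty<\varepsilon$. All $\mu^\lambda_A\otimes\mu^\lambda_B$ and $\mu_A\otimes\mu_B$ are probability measures, so replacing $f$ by $\phi$ changes each integral by less than $\varepsilon$, uniformly in $\lambda$. By Step 1 the $\phi$-integrals converge. Hence
\[
\limsup_\lambda\Bigl|\int f\,\de(\mu_A^\lambda\otimes\mu_B^\lambda)-\int f\,\de(\mu_A\otimes\mu_B)\Bigr|\le 2\varepsilon,
\]
and letting $\varepsilon\to0$ gives the claim. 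The uniform bound on total mass used here is exactly why the restriction to the set $\mathcal{M}$ of probability measures matters.

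\textbf{Main obstacle.} The only delicate point is that weak$^*$ convergence of the marginals does not trivially give convergence of $\int f\,\de(\mu_A^\lambda\otimes\mu_B^\lambda)$ for a non-product $f$. The approximation in Steps 2--3 handles this, relying on compactness of $\SU(2)$ for Stone--Weierstrass. An alternative route is to note that $f$ is a trigonometric polynomial in the matrix entries of $U_A$, $U_B$ and their conjugates. Indeed, $f$ is quadratic in the entries of $U_A\otimes U_B$ and their conjugates, so it already lies in $\mathcal{A}$ and Step 1 alone suffices. I would mention this as a shortcut, but keep the general argument.
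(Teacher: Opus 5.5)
Your proof is correct, but it takes a different route from the paper.

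\textbf{The paper's route.} It works with iterated integrals. It sets $g_i(U_B)=\int\langle\$_A(\cdot,U_B)\rangle\,\de\mu_{A_i}$ and $g(U_B)=\int\langle\$_A(\cdot,U_B)\rangle\,\de\mu_A$, and splits the error as $\bigl|\int(g_i-g)\,\de\mu_{B_i}\bigr|+\bigl|\int g\,\de\mu_{B_i}-\int g\,\de\mu_B\bigr|$. The second term is handled by continuity of $g$. The first is bounded by $\sup|g_i-g|$, so everything hinges on uniform convergence $g_i\to g$.

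\textbf{A weak point in the paper that your argument avoids.} The paper justifies uniform convergence by picking a point $\overline{U}_B$ where $|g_i-g|$ is maximal and using pointwise convergence there. But $\overline{U}_B$ depends on $i$, so as written that step needs an extra ingredient. One can use uniform continuity of the payoff on the compact product: it makes $\{\langle\$_A(\cdot,U_B)\rangle\}_{U_B\in\SU(2)}$ norm-compact in $\mathcal{C}(\SU(2))$, and a bounded weak$^*$-convergent net converges uniformly on norm-compact sets. Your argument sidesteps this. You only use weak$^*$ convergence against fixed test functions, and the uniformity in $\lambda$ comes from the trivial bound $|\int(f-\phi)\,\de\nu|\le\|f-\phi\|_\infty$ for probability measures $\nu$.

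\textbf{What each approach buys.} Your price is Stone--Weierstrass, a heavier tool than anything the paper invokes. In exchange, it applies verbatim to any jointly continuous payoff on a product of compact Hausdorff spaces, which is the generality claimed in the remark after Theorem \ref{esistenza}.

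\textbf{Your shortcut.} It is actually the cleanest proof in this specific setting. The payoff is quadratic in $u_A$ and quadratic in $u_B$ (strictly speaking an ordinary polynomial in the matrix entries and their conjugates, not a trigonometric polynomial). Hence it is a finite sum $\sum_k g_k(U_A)h_k(U_B)$. So $\langle\$_X\rangle$ is a finite sum of products of weak$^*$-continuous linear functionals, and no approximation or uniformity argument is needed at all.
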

\begin{proof} We prove the theorem for $X=A$. The other case is clearly analogous.\\
Let $(\mu_{A_i})_{i \in I}$, $(\mu_{B_i})_{i\in I}\subseteq\mathcal{M}$ such that
$\mu_{A_i} \xrightarrow{w^*} \mu_A$ and $\mu_{B_i} \xrightarrow{w^*} \mu_B$, and prove that $\langle\$_A(\mu_{A_i}, \mu_{B_i})\rangle \rightarrow_i \langle\$_A(\mu_A, \mu_B)\rangle$.

Since the function $U_B \mapsto \int \langle\$_A(U_A, U_B)\rangle \, \de\nu_A(U_A)$ is $\nu_B$-integrabile for all $\nu_A,\nu_B\in\mathcal{M}$ by virtue of inequality
\begin{equation}\label{payoff-lim}
|\langle\$_A(U_A, U_B)\rangle| \leq\max_{i,j\in\{0,1\}} |a_{ij}| \qquad \forall\  U_A, U_B \in \mathrm{SU(2)},
\end{equation}
Fubini's Theorem gives
\begin{align*}
| \langle\$_A(\mu_{A_i}, \mu_{B_i})\rangle - \langle \$_A(\mu_A, \mu_B)\rangle | &\leq
  \left| \int \left( \int \langle\$_A(U_A, U_B)\rangle  \de\mu_{A_i} - \int \langle\$_A(U_A, U_B)\rangle  \de\mu_A \right) \de\mu_{B_i} \right| \\
 &+ \left| \iint \langle\$_A(U_A, U_B)\rangle \de\mu_A \de\mu_{B_i} - \iint \langle\$_A(U_A, U_B)\rangle  \de\mu_A \de\mu_B  \right|\\
 &=\left| \int (g_i-g) \de\mu_{B_i}\right|+\left| \int g\, \de\mu_{B_i} - \int g\, \de\mu_B \right|,
\end{align*}
where we have set
\[ g_i(U_B) := \int \langle\$_A(\cdot, U_B)\rangle \de\mu_{A_i},\qquad g(U_B)=\int \langle\$_A(\cdot, U_B)\rangle \de\mu_A \] 
for all $U_B\in \mathrm{SU(2)}$ and $i\in I$.\\
Now, note that \( g_i,g : \mathrm{SU(2)} \to \mathbb{R} \) are continuous function by Lebesgue's Theorem, since \( \langle\$_A(U_A, \cdot) \rangle\) is continuous and inequality \eqref{payoff-lim} holds. Hence, 
\[\sup_{U_B\in \mathrm{SU(2)}} |g_i(U_B)-g(U_B)| = \max_{U_B \in \mathrm{SU(2)}} |g_i(U_B)-g(U_B)| = |g_i(\overline{U}_B)-g(\overline{U}_B)|\]
for a suitable \( \overline{U}_B \in \mathrm{SU(2)} \), being \( \mathrm{SU(2)} \) a compact set. But $w^*-\lim_i\mu_{A_i}=\mu_A$ and $\langle\$_A(\cdot, \overline{U}_B)\rangle\in\mathcal{C}(\mathrm{SU(2)})$, so that $g_i(\overline{U}_B)\rightarrow_ig(\overline{U}_B)$. This means that, given $\varepsilon>0$, there exists $i_\varepsilon\in I$ such that 
$$\sup_{U_B\in \mathrm{SU(2)}} |g_i(U_B)-g(U_B)| < \frac{\varepsilon}{2}\qquad\forall\,  i \ge i_\varepsilon,$$
and so
$$\left| \int (g_i-g) \de\mu_{B_i}\right|\leq \int |g_i-g|\de\mu_{B_i}< \int \frac{\varepsilon}{2} d\mu_{B_i} = \frac{\varepsilon}{2}\qquad\forall\,  i \ge i_\varepsilon.$$

On the other hand, since $g$ is continuous and $\mu_{B_i}\to\mu_B$ in the weak$^*$-topology on $\mathcal{M}$, there exists $j_\varepsilon \in I$ such that
$$\left| \int g\, d\mu_{B_i} - \int g\, \de\mu_B \right|< \frac{\varepsilon}{2} \qquad \forall\, i \geq j_\varepsilon.$$
Finally, taking $k_\varepsilon \in I$ with $k_\varepsilon \geq i_\varepsilon$ and $k_\varepsilon \geq j_\varepsilon$, we get  
$$| \langle\$_A(\mu_{A_i}, \mu_{B_i})\rangle - \langle \$_A(\mu_A, \mu_B)\rangle |<\varepsilon\qquad\forall\, i \geq k_\varepsilon,$$
which is the thesis. In the same way we can prove the weak$^*$ continuity of $\langle\$_B\rangle$ on $\mathcal{M}$.
\end{proof}

We are now in position to show the existence of at least one Nash equilibrium for quantum mixed strategies.

\begin{theorem}\label{esistenza}
Let $\mathcal{B}$ be the best-response multifunction of a static $2\times 2$ quantum game. Then
$\operatorname{Fix} \mathcal{B} \neq \emptyset$.
\end{theorem}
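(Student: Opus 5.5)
We will apply Theorem \ref{teo_kgk} with $X=M(\SU(2))\times M(\SU(2))$ endowed with the product weak$^*$ topology, $K=\mathcal{M}\times\mathcal{M}$ and $F=\mathcal{B}$. The product of two $\text{T}_2$ locally convex spaces is again $\text{T}_2$ and locally convex. The product of two nonempty, convex, weak$^*$ compact sets is nonempty, convex and compact, by Tychonoff. So, as observed before the statement, it only remains to check conditions 1 and 2 for $\mathcal{B}$.

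\emph{Nonemptiness.} Fix $\mu_B$. By Lemma \ref{contw} the map $\nu_A\mapsto\langle\$_A(\nu_A,\mu_B)\rangle$ is weak$^*$ continuous on the compact set $\mathcal{M}$, so it attains its supremum. Hence $\mathcal{B}_A(\mu_B)\neq\emptyset$, and the same argument gives $\mathcal{B}_B(\mu_A)\neq\emptyset$.

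\emph{Convexity.} The payoff is affine in each measure separately: $\langle\$_A(t\nu+(1-t)\nu',\mu_B)\rangle=t\langle\$_A(\nu,\mu_B)\rangle+(1-t)\langle\$_A(\nu',\mu_B)\rangle$, by linearity of the integral in the measure. If $\nu,\nu'$ both attain the supremum $m$, then so does their convex combination, since the value is $tm+(1-t)m=m$. Thus $\mathcal{B}_A(\mu_B)$ and $\mathcal{B}_B(\mu_A)$ are convex, and so is their product.

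\emph{Closed graph.} This is the step requiring the most care, since $\mathcal{M}$ is not metrizable in general and we must work with nets. Let $(\mu_{A_i},\mu_{B_i},\nu_{A_i},\nu_{B_i})_{i\in I}$ be a net in the graph converging weak$^*$ to $(\mu_A,\mu_B,\nu_A,\nu_B)$. For every $\lambda\in\mathcal{M}$ we have $\langle\$_A(\nu_{A_i},\mu_{B_i})\rangle\ge\langle\$_A(\lambda,\mu_{B_i})\rangle$. Passing to the limit in $i$ by the joint continuity of Lemma \ref{contw} gives $\langle\$_A(\nu_A,\mu_B)\rangle\ge\langle\$_A(\lambda,\mu_B)\rangle$; joint continuity is exactly what is needed here, because both arguments move along the net. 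Hence $\nu_A\in\mathcal{B}_A(\mu_B)$, and symmetrically $\nu_B\in\mathcal{B}_B(\mu_A)$. So the limit point lies in the graph, and the graph is closed in $K\times K$.

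Theorem \ref{teo_kgk} then yields a fixed point $(\mu_A^\star,\mu_B^\star)\in\mathcal{B}(\mu_A^\star,\mu_B^\star)$. By the remark following the definition of quantum Nash equilibrium, this fixed point is a quantum Nash equilibrium.
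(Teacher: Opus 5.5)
Your proof is correct and follows essentially the same route as the paper: you apply Kakutani--Glicksberg--Ky Fan on $\mathcal{M}\times\mathcal{M}$ with the weak$^*$ topology, get nonemptiness from Lemma \ref{contw} plus compactness, get convexity from the affinity of the payoff in each measure, and prove closedness of the graph by passing to the limit along nets using joint continuity. One small remark: since $\SU(2)$ is a compact metric space, $\mathcal{C}(\SU(2))$ is separable and $\mathcal{M}$ is in fact weak$^*$ metrizable, so your caveat about non-metrizability is unfounded here; sequences would suffice, though your net argument is perfectly valid.
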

\begin{proof}
As said before, in order to apply Theorem \ref{teo_kgk} we have to prove that $\mathcal{B}$ has a closed graph and $\mathcal{B}(\mu_A,\mu_B)$ is a non empty convex set for all $\mu_A,\mu_B\in\mathcal{M}$. First of all we note that, since $\mathcal{B}(\mu_A,\mu_B) = \mathcal{B}_A(\mu_B) \times \mathcal{B}_B(\mu_A)$, it is enough to show these properties for each of the multifunctions $\mathcal{B}_X:\mathcal{M}\to 2^\mathcal{M}$, with $X=A,B$. Assume for example $X=A$, the other case is symmetric.

Let $\mu_B \in \mathcal{M}$. Thanks to Lemma \ref{payoff-lim} the map \( \langle\$_A (\cdot, \mu_B)\rangle\) is w$^*$-continuous on the $w^*$-compact set $\mathcal{M}$, and then there exists a measure $\overline{\nu}_A \in \mathcal{M}$ such that 
$$\langle\$_A (\overline{\nu}_A, \mu_B)\rangle = \max_\mathcal{M} \langle\$_A (\cdot, \mu_B)\rangle=:c.$$
Therefore, the set
\[
\mathcal{B}_A (\mu_B) = \{\mu_A \in \mathcal{M} \mid \langle\$_A (\mu_A, \mu_B)\rangle = \sup_{\mathcal{M}} \langle\$_A (\cdot, \mu_B)\rangle\},
\]
can be written as
 \[ \mathcal{B}_A (\mu_B) = \{\mu_A \in \mathcal{M} \mid \langle\$_A (\mu_A, \mu_B)\rangle = c\}, \] 
 and $\nu_A$ belongs to it.
 The convexity of $\mathcal{B}_A (\mu_B)$ immediately follows.

We prove now that $\mathcal{G}(\mathcal{B}_A)$ is weakly$^*$ closed in $\mathcal{M}$.\\
Given \((\mu_{B_i})_i, (\mu_{A_i})_i \subseteq \mathcal{M}\) such that $w^*-\lim_i\mu_{B_i}=\mu_B$,\ $w^*-\lim_i\mu_{A_i}=\mu_A$ and \(\mu_{A_i} \in \mathcal{B}_A (\mu_{B_i})\) for all $i\in I$, we have to show that \(\mu_A\) belongs to $\mathcal{B}_A (\mu_B)$, i.e. \[\langle \$_A (\mu_A, \mu_B)\rangle =\max_{\mathcal{M}} \langle\$_A (\cdot, \mu_B)\rangle.\]
By definition of $\mathcal{B}_A (\mu_{B_i})$ we have
\[ \langle\$_A (\mu_{A_i}, \mu_{B_i})\rangle = \max_\mathcal{M} \langle\$_A (\cdot, \mu_{B_i})\rangle \geq \langle\$_A (\nu_A, \mu_{B_i})\rangle \quad \forall\, i\in I,\  \forall\, \nu_A \in \mathcal{M},\]
and $$\langle\$_A(\mu_{A_i}, \mu_{B,i})\rangle \xrightarrow[i]{} \langle\$_A(\mu_A, \mu_B)\rangle,\qquad\langle\$_A(\nu_A, \mu_{B_i}) \xrightarrow[i]{} \langle\$_A(\nu_A, \mu_B)\rangle$$ because $\langle\$_A\rangle$ and $\langle\$_A(\nu_A, \cdot)\rangle$ are weakly$^*$-continuous on $\mathcal{M}$ (see Lemma \ref{contw}).\\
As a consequence we obtain
$$\langle\$_A(\mu_A, \mu_B)\rangle \geq \langle\$_A(\nu_A, \mu_B)\rangle \quad \forall\  \nu_A \in \mathcal{M},$$
giving $\langle\$_A(\mu_A, \mu_B)\rangle = \max_\mathcal{M} \langle\$_A(\cdot, \mu_B)\rangle$.
\end{proof}

\begin{remark}\rm
Theorem \ref{esistenza} generalises the known existence result for discrete mixed quantum strategies \cite{lee} to the continuous setting, where players may choose arbitrary probability measures on $\mathrm{SU}(2)$. It extends beyond the $2\times2$ case. Its proof uses only the weak*-compactness of the spaces of mixed strategies and the continuity of the payoff functions. Hence, it applies to any quantum game satisfying these conditions, including games with more than two strategies per player or different Hilbert space structures.
\end{remark}

Since ${\rm Fix}\mathcal{B}$ is precisely the set of mixed strategies that are Nash equilibria, we immediately obtain the following:
\begin{corollary} \label{cor:Nash}
    Any static $2\times 2$ quantum game has at least a Nash equilibrium in quantum mixed strategies.
\end{corollary}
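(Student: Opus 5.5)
The plan is to read Corollary \ref{cor:Nash} off Theorem \ref{esistenza}. The only thing to check is that fixed points of $\mathcal{B}$ are exactly the quantum Nash equilibria. Fix a static $2\times 2$ quantum game $\mathbf{G}$. By Theorem \ref{esistenza} there is a pair $(\mu_A^\star,\mu_B^\star)\in\mathcal{M}\times\mathcal{M}$ with
\[
(\mu_A^\star,\mu_B^\star)\in\mathcal{B}(\mu_A^\star,\mu_B^\star)=\mathcal{B}_A(\mu_B^\star)\times\mathcal{B}_B(\mu_A^\star).
\]
Splitting this product gives $\mu_A^\star\in\mathcal{B}_A(\mu_B^\star)$ and $\mu_B^\star\in\mathcal{B}_B(\mu_A^\star)$.

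Next, unfold the definition of $\mathcal{B}_A$. The condition $\mu_A^\star\in\mathcal{B}_A(\mu_B^\star)$ says
\[
\langle\$_A(\mu_A^\star,\mu_B^\star)\rangle=\sup_{\nu_A\in\mathcal{M}}\langle\$_A(\nu_A,\mu_B^\star)\rangle,
\]
so $\langle\$_A(\mu_A^\star,\mu_B^\star)\rangle\ge\langle\$_A(\mu_A,\mu_B^\star)\rangle$ for every $\mu_A\in\mathcal{M}_A=\mathcal{M}$. This is the first inequality in the definition of quantum Nash equilibrium. The same argument with $\mathcal{B}_B$ gives the second inequality for player $B$. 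Hence $(\mu_A^\star,\mu_B^\star)$ is a quantum Nash equilibrium in mixed strategies.

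There is no real obstacle here, since all the analytic work (weak$^*$ compactness of $\mathcal{M}$ and weak$^*$ continuity of the payoffs from Lemma \ref{contw}) was already done in Theorem \ref{esistenza}. The one point to note is that the supremum in the definition of $\mathcal{B}_A$ is finite: by the uniform bound in the earlier Remark it is at most $\max_{i,j}|a_{ij}|$. It is also attained, as shown in the proof of Theorem \ref{esistenza}. So membership in $\mathcal{B}_A(\mu_B^\star)$ really is the maximality inequality, and the identification $\operatorname{Fix}\mathcal{B}=\{\text{Nash equilibria}\}$ goes through.
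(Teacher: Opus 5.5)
Your proposal is correct and follows the paper's route: the paper reads the corollary off Theorem~\ref{esistenza} via the observation that $\operatorname{Fix}\mathcal{B}$ is exactly the set of quantum Nash equilibria, which you verify by unfolding $\mathcal{B}(\mu_A^\star,\mu_B^\star)=\mathcal{B}_A(\mu_B^\star)\times\mathcal{B}_B(\mu_A^\star)$. Your closing remarks on the supremum being finite and attained are harmless but not needed here, since $\mu_A^\star$ realising the supremum already gives the inequality against every $\mu_A\in\mathcal{M}$.
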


\section{The Eisert-Wilkens-Lewenstein protocol}
The most widely used framework for implementing quantum versions of simultaneous $2\times2$ games is the protocol introduced by Eisert, Wilkens and Lewenstein (EWL).  See \cite{eisert_quantum} for further details.

It proceeds as follows:
\begin{enumerate}
\item Start from the state \(\ket{\varphi_0}=\ket{00}\).
\item Apply a unitary operator \(J\) that introduces a controlled amount of entanglement between the qubits, obtaining \(\ket{\varphi_1}=J\ket{\varphi_0}\).
\item Each player applies his own unitary strategy \(U_A,U_B\in\SU(2)\) to his qubit:
   \(\ket{\varphi_2}=(U_A\otimes U_B)\ket{\varphi_1}\).
\item Apply the disentangling operator \(J^\dagger\):
   \(\ket{\varphi_F}=J^\dagger\ket{\varphi_2}\).
\item Measure \(\ket{\varphi_F}\) in the computational basis \(\{\ket{00},\ket{01},\ket{10},\ket{11}\}\).
\end{enumerate}   

After the players' moves the state is then 
\begin{equation}\label{final-state}
\ket{\varphi_F}=J^\dagger(U_A\otimes U_B)J\ket{00},
\end{equation}
the probability of outcome \((i,j)\) is $p_{ij}=|\braket{ij|\varphi_F}|^2$, and the expected payoffs are
   \[
   \langle\$_{A}\rangle=\sum_{i,j\in\{0,1\}} a_{ij}p_{ij},\qquad
   \langle\$_{B}\rangle=\sum_{i,j\in\{0,1\}} b_{ij}p_{ij}.
\]

Note that this protocol yields the quantum game $\mathbf{G}$ with initial state $\rho=J|00\rangle\langle 00|J^\dagger$ and payoff operators 
\begin{equation}
    \label{PX-EWL}
P_X:=\sum_{i,j\in\{0,1\}}x_{ij}J|ij\rangle\langle ij|J^\dagger
\end{equation}
associated with the PVM $\{J\ket{ij}\,|\,i,j=0,1\}$. Indeed, by setting $\ket{\psi_{ij}}=J\ket{ij}$ and $\ket{v_f}=J\ket{\varphi_F}=(U_A\otimes U_B)J\ket{00}$, we have
\begin{align*}\langle\$_{X}\rangle&=\sum_{i,j} x_{ij}|\langle ij|\varphi_F\rangle|^2=\sum_{i,j} x_{ij}|\langle\psi_{ij}|(U_A\otimes U_B)J\ket{00}|^2\\
&=\sum_{i,j} x_{ij}|\langle\psi_{ij}|v_f\rangle|^2,
\end{align*}
which is precisely the expected payoff $\langle\$_{X}(U_A,U_B)\rangle$ of the game $\mathbf{G}$ by equation \eqref{eq:payoff-puro}, since the corresponding final state is the pure state
$$\rho_f(U_A, U_B)=(U_A\otimes U_B)J|00\rangle\langle 00|J^\dagger(U_A\otimes U_B)^\dagger=|v_f\rangle\langle v_f|.$$

   \subsection{Construction of the entangler $J$}
The entangling operator $J$ is designed to satisfy two fundamental requirements:
\begin{enumerate}
\item The game must be symmetric under exchange of the two players.
\item The classical game must be contained as a special case, i.e.~when the players are limited to classical moves the outcome probabilities should coincide with the classical ones.
\end{enumerate}
These requirements determine the form of \(J\) that prepares the initial quantum state. First of all, to ensure that the classical game is embedded, \(J\) must commute with any tensor product of classical moves (the identity \(\sigma_0:=I\) and the flip \(F=\mi\sigma_x\)). Since $\{\sigma_0, \mi\sigma_x, \mi\sigma_y, \mi\sigma_z\}$ is a basis of $\SU(2)$ and $[\sigma_j,\sigma_x]=0$ if and only if $j=0,x$, this condition gives
$$J=\sum_{i,j=0,x}J_{ij}\sigma_i\otimes\sigma_j.$$ 
Moreover, the symmetry of the game forces $J$ to commute with the swap operator that exchanges the two players, i.e. it must be of the form $J=J_{00}\,I\otimes I+J_{xx}\,\sigma_x\otimes\sigma_x.$ Finally, by requiring $J$ belonging to $\SU(2)$, we get
\begin{align}\label{defJ}
J&=\cos\frac{\gamma}{2}\,I\otimes I + \mi\sin\frac{\gamma}{2}\,\sigma_x\otimes\sigma_x=\exp\!\Bigl(\mi\frac{\gamma}{2}\,\sigma_x\otimes\sigma_x\Bigr)\\
\nonumber&=\begin{pmatrix}
\cos\frac{\gamma}{2} & 0 & 0 & \mi\sin\frac{\gamma}{2}\\[2pt]
0 & \cos\frac{\gamma}{2} & \mi\sin\frac{\gamma}{2} & 0\\[2pt]
0 & \mi\sin\frac{\gamma}{2} & \cos\frac{\gamma}{2} & 0\\[2pt]
\mi\sin\frac{\gamma}{2} & 0 & 0 & \cos\frac{\gamma}{2}
\end{pmatrix},
\qquad\gamma\in[0,\pi/2],
\end{align}
where the parameter \(\gamma\) controls the entanglement: \(\gamma=0\) gives no entanglement, \(\gamma=\pi/2\) maximal entanglement.

In particular, 
\begin{equation}\label{eq:psi_1}
\ket{\psi_1}=J\ket{00}= \cos\frac{\gamma}{2}\,\ket{00}+\mi\sin\frac{\gamma}{2}\,\ket{11}.
\end{equation}
\\

Note that The EWL protocol contains the classical game as a special case.  
\begin{proposition}[Classical subgame]\label{prop:classical}
The classical game is recovered within the EWL protocol in two natural ways:
\begin{itemize}
\item If $\gamma=0$ (no entanglement) the EWL protocol reproduces the classical game for any choice of $U_A,U_B\in\SU(2)$.
\item If the players are restricted to strategies of the form $U(\theta,0,0)$, the outcome probabilities factorise as in the classical mixed-strategy game, independently of $\gamma$.
      
\end{itemize}
\end{proposition}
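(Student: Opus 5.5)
The plan is to reduce both items to one computation: when the entangler can be moved past the players' operators, the final state \eqref{final-state} becomes a product state. In that case the outcome probabilities factorise into independent marginals, which is exactly the statistics of a classical mixed-strategy profile.

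\emph{First item.} For $\gamma=0$, \eqref{defJ} gives $J=I\otimes I$, so $\ket{\varphi_F}=(U_A\otimes U_B)\ket{00}=U_A\ket{0}\otimes U_B\ket{0}$. Hence
\[
p_{ij}=|\langle i|U_A|0\rangle|^2\,|\langle j|U_B|0\rangle|^2 .
\]
With the parameterisation $U(\theta,\alpha,\beta)$ one reads off $|\langle 0|U_A|0\rangle|^2=\cos^2\frac{\theta_A}{2}=:p$ and $|\langle 1|U_A|0\rangle|^2=\sin^2\frac{\theta_A}{2}=1-p$; the phases $\alpha,\beta$ drop out under the modulus. The analogous identities hold for $B$ with $q:=\cos^2\frac{\theta_B}{2}$. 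Substituting into $\langle\$_X\rangle=\sum_{i,j}x_{ij}p_{ij}$ gives $E_A(p,q)$ and $E_B(p,q)$ from Section~2, identifying outcome $0$ with $s_1$ (resp.\ $t_1$). So every pair of quantum moves reproduces the classical mixed profile $((p,1-p),(q,1-q))$, and pure classical moves correspond to $\theta\in\{0,\pi\}$.

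\emph{Second item.} Since $U(\theta,0,0)=\cos\frac\theta2 I+\mi\sin\frac\theta2\sigma_x$ is a combination of $I$ and $\sigma_x$, it commutes with $\sigma_x$. Therefore $U_A\otimes U_B$ commutes with $\sigma_x\otimes\sigma_x$, and hence with $J=\exp(\mi\frac\gamma2\sigma_x\otimes\sigma_x)$. This gives $J^\dagger(U_A\otimes U_B)J=U_A\otimes U_B$ for every $\gamma$, and we are back in the computation of the first item. That computation yields factorised probabilities with $p=\cos^2\frac{\theta_A}{2}$ and $q=\cos^2\frac{\theta_B}{2}$, independently of $\gamma$; this matches the Proposition of Section~3 relating $U(\theta,0,0)$ to the mixed strategy $(p,1-p)$.

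No step is a real obstacle. The only point needing care is making precise what ``reproduces the classical game'' means. I would state it as equality of the outcome distribution $(p_{ij})$ with the product distribution $\mu_{A,p}\otimes\mu_{B,q}$, so that the expected payoffs coincide with $E_A(p,q)$ and $E_B(p,q)$. I would also note that as $\theta_A,\theta_B$ range over $[0,\pi]$, the pair $(p,q)$ covers all of $[0,1]^2$, so the whole classical mixed game is embedded.
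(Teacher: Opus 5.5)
Your proof is correct and follows the paper's argument: for $\gamma=0$ one has $J=I\otimes I$, and for strategies $U(\theta,0,0)$ the entangler commutes with $U_A\otimes U_B$. In both cases $\ket{\varphi_F}=U_A\ket{0}\otimes U_B\ket{0}$, so the probabilities factorise with $p=\cos^2\frac{\theta_A}{2}$ and $q=\cos^2\frac{\theta_B}{2}$, exactly as in the paper. You also spell out two steps the paper leaves implicit: why $J$ commutes with $U_A\otimes U_B$ (through $\sigma_x\otimes\sigma_x$), and why the phases $\alpha,\beta$ drop out of the marginals when $\gamma=0$.
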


\begin{proof}
Since $J$ commutes with the group $\{U(\theta,0,0)\,|\,\theta\in[0,\pi]\}$ generated by $I$ and $\sigma_x$, for $U_A=U(\theta_A,0,0)$ and $U_B=U(\theta_B,0,0)$ the probabilities $p_{ij}$ factorize as
\[|\braket{ij|\varphi_F}|^2=|\braket{i|U_A|0}|^2\,|\braket{j|U_B|0}|^2.\]  
Since $|\braket{0|U(\theta,0,0)|0}|^2=\cos^2(\theta/2)$ and $|\braket{1|U(\theta,0,0)|0}|^2=\sin^2(\theta/2)$, setting $p=\cos^2(\theta_A/2)$ and $q=\cos^2(\theta_B/2)$ yields exactly the classical mixed-strategy probabilities.  

The case $\gamma=0$ follows because $J=I$ makes the probabilities factorise for any $U_A,U_B\in \SU(2)$.
\end{proof}
Hence, the classical mixed strategies are exactly replicated by the one-parameter family \(U(\theta,0,0)\); the entanglement operator \(J\) commutes with such strategies and therefore does not generate any non-classical correlation.

\begin{remark}\rm
If player $A$ plays pure quantum strategies $U_A^{(1)},\ldots,U_A^{(N)}$ with probabilities $p_1,\ldots,p_N$, and player $B$ plays $U_B^{(1)},\ldots,U_B^{(L)}$ with probabilities $q_1,\ldots,q_L$, then the final state of the game $\mathbf{G}$ is a mixed state,
\[
\rho_f=\sum_{i=1}^N\sum_{j=1}^L p_i q_j (U_A^{(i)}\otimes U_B^{(j)})J\ket{00}\bra{00}J((U_A^{(i)})^\dagger\otimes (U_B^{(j)})^\dagger),
\]
and
\begin{equation}\label{eq:payoff-misto}
\langle\$_X\rangle=\Tr(\rho_f P_X)=\sum_{i=1}^N\sum_{j=1}^L p_i q_j \langle\$_X(U_A^{(i)},U_B^{(j)})\rangle
\end{equation}
for $X=A,B$, where $\langle\$_X(U_A^{(i)},U_B^{(j)})\rangle$ denotes the expected payoff of player $X$ when $A$ plays $U_A^{(i)}$ and $B$ adopts $U_B^{(j)}$, and $P_X$ is given by (\ref{PX-EWL}).\end{remark}

\subsection{Nash equilibria in the EWL protocol}
The general existence result for quantum mixed-
strategy Nash equilibria (Corollary \ref{cor:Nash}) applies directly to the EWL scheme.

\begin{theorem}[Existence of Nash equilibria in the EWL protocol]\label{esistenza-EWL}
For any static $2\times2$ game implemented through the EWL protocol with an arbitrary entanglement parameter $\gamma$, there exists at least one Nash equilibrium in quantum mixed strategies.
\end{theorem}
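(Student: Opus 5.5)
The plan is to reduce the statement to Corollary \ref{cor:Nash} by checking that the EWL protocol, for every fixed $\gamma\in[0,\pi/2]$, produces a static $2\times 2$ quantum game $\mathbf{G}$ in the sense of the Definition of Section 3. As noted after \eqref{final-state}, the EWL scheme corresponds to the tuple
\[
\mathbf{G}_\gamma=\bigl(\mathbb{C}^2\otimes\mathbb{C}^2,\ \rho_\gamma,\ \SU(2),\ \SU(2),\ P_A,\ P_B\bigr),\qquad \rho_\gamma=J\ket{00}\!\bra{00}J^\dagger,
\]
with $P_X$ given by \eqref{PX-EWL}. So I only need to verify the structural requirements for each $\gamma$.

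First, $\rho_\gamma$ is a state. It is the rank-one projection onto the unit vector $J\ket{00}$ from \eqref{eq:psi_1}, because $J$ in \eqref{defJ} is unitary. Second, I check that $\{J\ket{ij}\}_{i,j=0,1}$ is an orthonormal basis of $\mathbb{C}^4$, which again follows from unitarity of $J$. Hence $\{J\kb{ij}{ij}J^\dagger\}$ is a PVM, and $P_X$ has exactly the form \eqref{def-PX} with real eigenvalues $x_{ij}$, so it is self-adjoint. Third, I identify the EWL expected payoff with $\Tr(\rho_f(U_A,U_B)P_X)$. This was already computed in the text just after \eqref{PX-EWL} via \eqref{eq:payoff-puro}. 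For mixed strategies, the payoff is the double integral against $\mu_A\otimes\mu_B$, consistent with \eqref{eq:payoff-misto} in the discrete case.

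With these facts established, Lemma \ref{lem:cont} gives continuity of $(U_A,U_B)\mapsto\langle\$_X(U_A,U_B)\rangle$, and Lemma \ref{contw} gives weak$^*$ continuity on $\mathcal{M}\times\mathcal{M}$. Theorem \ref{esistenza} and Corollary \ref{cor:Nash} then yield a fixed point of the best-response map $\mathcal{B}$, that is, a Nash equilibrium in quantum mixed strategies. None of the earlier arguments used any property of $\gamma$ or any structure of $\rho$ beyond its being a density operator, so the conclusion holds uniformly for every entanglement parameter. This includes non-maximal $\gamma$, where Landsburg's reduction is not available.

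The only step requiring care is the identification of the EWL payoff with the trace formula for mixed strategies, since the protocol is described operationally. For pure strategies this identification is already done in the text. Linearity of the integral extends it to arbitrary $\mu_A,\mu_B$ by definition, so I do not expect a real obstacle there.
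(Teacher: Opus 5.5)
Your proposal is correct and follows the paper's route: the paper shows, in the discussion after \eqref{PX-EWL}, that the EWL protocol is the static quantum game $\mathbf{G}$ with initial state $\rho=J\kb{00}{00}J^\dagger$ and PVM $\{J\ket{ij}\}$, and then applies Corollary~\ref{cor:Nash} directly, with no dependence on $\gamma$. Your explicit checks that $\rho_\gamma$ is a state and that $\{J\kb{ij}{ij}J^\dagger\}$ is a PVM, both following from unitarity of $J$, spell out what the paper leaves implicit.
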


This result guarantees that, regardless of the entanglement level, the quantum game always possesses strategic equilibria. 
Their concrete form, however, depends crucially on $\gamma$ and can differ radically from the classical Nash equilibria.
\begin{remark}\rm
For maximally entangled EWL games ($\gamma=\pi/2$), Landsburg \cite{landsburg} developed a quaternionic formalism that reduces mixed strategies to those supported on at most four pure strategies. This reduction relies crucially on a symmetry property of the maximally entangled state that allows unitary operations to be transferred from one qubit to the other. For $\gamma\neq\pi/2$, this symmetry is broken and the quaternionic reduction fails. This justifies the need for a general existence theorem covering all entanglement levels.
\end{remark}

\subsection{Payoff as a quadratic form}

Within the EWL protocol, using the real-vector representation \(u_A,u_B\in\mathbb{R}^4\) of the strategies \(U_A,U_B\),the expected payoffs of players can be written as quadratic forms in $u_A$ and $u_B$.

\begin{theorem}[Payoff as a quadratic form]\label{thm:quadratic}
For any choice of \(U_B\) (equivalently \(u_B\)), there exists a self-adjoint \(4\times4\) matrix \(P_A(u_B)\) such that
\[
\langle\$_A(u_A,u_B)\rangle = u_A^{\mathsf T}\,P_A(u_B)\,u_A=\scalar{u_A}{P_A(u_B)u_A}.
\]
The matrix is given by
\[
P_A(u_B)=\sum_{j,k=0}^{1} a_{jk}\; |M_{jk}(u_B)\rangle\langle M_{jk}(u_B)|,
\]
where the vectors \(M_{jk}(u_B)\in\mathbb{C}^4\) are obtained from the expansion of \(\langle jk|\varphi_F\rangle\) as a linear function of \(u_A\).  
Explicitly,
\begin{align*}
M_{00}(u_B)&=\begin{pmatrix}u_B^1+\mi u_B^4\cos\gamma\\ -u_B^3\sin\gamma\\ -u_B^2\sin\gamma\\ -u_B^4+\mi u_B^1\cos\gamma\end{pmatrix},
&
M_{01}(u_B)&=\begin{pmatrix}\mi u_B^2-u_B^3\cos\gamma\\ \mi u_B^4\sin\gamma\\ \mi u_B^1\sin\gamma\\ -u_B^2\cos\gamma-\mi u_B^3\end{pmatrix},\\[6pt]
M_{10}(u_B)&=\begin{pmatrix}\mi u_B^3\sin\gamma\\ \mi u_B^1-u_B^4\cos\gamma\\ -\mi u_B^4-u_B^1\cos\gamma\\ \mi u_B^2\sin\gamma\end{pmatrix},
&
M_{11}(u_B)&=\begin{pmatrix}u_B^4\sin\gamma\\ -u_B^2-\mi u_B^3\cos\gamma\\ u_B^3-\mi u_B^2\cos\gamma\\ u_B^1\sin\gamma\end{pmatrix}.
\end{align*}

In particular, 
\begin{equation}\label{payoff-EWL}
\langle\$_A(u_A,u_B)\rangle=\sum_{l,k=0,1}a_{lk}|\scalar{u_A}{M_{lk}(u_B)}|^2.
\end{equation}

\end{theorem}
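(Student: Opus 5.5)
The plan is to compute $\langle jk|\varphi_F\rangle$ explicitly as a linear function of $u_A$, and then read off the quadratic form.

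First, by \eqref{final-state} and \eqref{defJ}, $\langle jk|\varphi_F\rangle=\langle jk|J^\dagger(U_A\otimes U_B)J|00\rangle$. Expand
\[
U_A=w_A I+\mi x_A\sigma_x+\mi y_A\sigma_y+\mi z_A\sigma_z ,
\]
so that $U_A\otimes U_B=\sum_{\alpha}u_A^{\alpha}\,(\tau_\alpha\otimes U_B)$, where $\tau_\alpha\in\{I,\mi\sigma_x,\mi\sigma_y,\mi\sigma_z\}$. Hence
\[
\langle jk|\varphi_F\rangle=\sum_{\alpha=1}^4 u_A^\alpha\, c_{jk}^\alpha(u_B),\qquad c_{jk}^\alpha(u_B):=\langle jk|J^\dagger(\tau_\alpha\otimes U_B)J|00\rangle .
\]
This is linear in the real vector $u_A$, with complex coefficients. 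Since $U_B$ is itself linear in $u_B$, each $c^\alpha_{jk}$ is linear in $u_B$ and depends on $\gamma$ through $J$.

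Define $M_{jk}(u_B)$ as the vector whose $\alpha$-th entry is $\overline{c^\alpha_{jk}(u_B)}$, so that $\langle jk|\varphi_F\rangle=\scalar{M_{jk}(u_B)}{u_A}$. Because $u_A$ is real, this gives
\[
|\langle jk|\varphi_F\rangle|^2=|\scalar{u_A}{M_{jk}(u_B)}|^2=\scalar{u_A}{\,|M_{jk}\rangle\langle M_{jk}|\,u_A}.
\]
Summing with weights $a_{jk}$ yields \eqref{payoff-EWL}, and also the representation via $P_A(u_B)=\sum a_{jk}|M_{jk}\rangle\langle M_{jk}|$. This matrix is Hermitian, being a real combination of rank-one projectors. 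For a real $u_A$ one may, if desired, replace $P_A(u_B)$ by its real part, which is real symmetric, so that the payoff can be written as $u_A^{\mathsf T}P_A u_A$. This structural part of the argument is immediate.

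The main work, and the only real obstacle, is the explicit computation of the sixteen coefficients $c^\alpha_{jk}$ and their matching with the stated vectors, including conjugation conventions and the sign convention $y=-\sin\beta\sin\frac\theta2$. I would organize it as follows. Write $J=\cos\frac\gamma2\,I\otimes I+\mi\sin\frac\gamma2\,\sigma_x\otimes\sigma_x$, and use \eqref{eq:psi_1} for $J|00\rangle$. Compute $J|jk\rangle$ similarly, which amounts to pairing $|jk\rangle$ with its bit-flipped partner. Then evaluate the matrix elements of $\tau_\alpha\otimes U_B$ between these two-term states, using the Pauli action on $|0\rangle,|1\rangle$.

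Products $\cos^2\frac\gamma2-\sin^2\frac\gamma2=\cos\gamma$ and $2\sin\frac\gamma2\cos\frac\gamma2=\sin\gamma$ will then appear, and these explain the entries. For example, for $(j,k)=(0,0)$ and $\alpha=1$ one gets
\[
\cos^2\tfrac\gamma2\,\langle0|U_B|0\rangle+\sin^2\tfrac\gamma2\,\langle1|U_B|1\rangle+\text{cross terms},
\]
and this reduces to $w_B\pm\mi z_B\cos\gamma$, consistent with the first entry of $M_{00}$ up to conjugation. I would check a few entries this way and obtain the rest by the symmetries of the Pauli products. As sanity checks I would verify two things: that $\sum_{jk}|\langle jk|\varphi_F\rangle|^2=1$, and that for $\gamma=0$ the probabilities factorize as in Proposition \ref{prop:classical}.
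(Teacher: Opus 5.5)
Your proposal is correct and is essentially the paper's argument: both write each amplitude $\langle jk|\varphi_F\rangle$ as a linear form in the real vector $u_A$, then sum the squared moduli weighted by $a_{jk}$, and both leave the explicit entries to direct calculation (you expand $U_A$ itself in the Pauli basis, while the paper expands $U_A\ket{0}$ and $U_A\ket{1}$, which amounts to the same computation). The one adjustment is the conjugation convention: your $M_{jk}$, defined by $\langle jk|\varphi_F\rangle=\scalar{M_{jk}}{u_A}$, is the complex conjugate of the stated vector (for instance its first entry for $(j,k)=(0,0)$ is $u_B^1-\mi u_B^4\cos\gamma$), which is harmless for the payoff since $u_A$ is real, but to obtain the displayed vectors literally you should take $M_{jk}$ to be the coefficient vector itself, $\langle jk|\varphi_F\rangle=u_A^{\mathsf T}M_{jk}(u_B)=\scalar{u_A}{M_{jk}(u_B)}$, as the paper does, and with that convention the direct computation reproduces all four stated vectors exactly.
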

\begin{proof}
Write the state $\ket{\varphi_F}$ as
\[
\ket{\varphi_F}=J^{\dagger}\!\Bigl[U_A\ket{0}\otimes U_B\ket{0}\cos\frac\gamma2
+\mi\,U_A\ket{1}\otimes U_B\ket{1}\sin\frac\gamma2\Bigr].
\]
Expressing \(U_A\ket{0},U_A\ket{1}\) in terms of the components of \(u_A=(u_A^1,u_A^2,u_A^3,u_A^4)\),
\[
U_A\ket{0}=(u_A^1+\mi u_A^4)\ket{0}+(\mi u_A^2-u_A^3)\ket{1},\quad
U_A\ket{1}=(\mi u_A^2+u_A^3)\ket{0}+(u_A^1-\mi u_A^4)\ket{1},
\]
and similarly for \(U_B\), a direct computation shows that each amplitude  \(\langle jk|\varphi_F\rangle\) is a linear function of \(u_A\):
\[
\langle jk|\varphi_F\rangle= u_A^{\mathsf T}M_{jk}(u_B)=\langle u_A,M_{jk}(u_B)\rangle.
\]
Hence
\begin{align*}
\langle\$_A\rangle&=\sum_{j,k}a_{jk}|\langle jk|\varphi_F\rangle|^2
      =\sum_{j,k}a_{jk}\bigl|\langle u_A,M_{jk}(u_B)\rangle\bigr|^2\\
      &=\scalar{u_A}{\!\sum_{j,k}a_{jk}|M_{jk}(u_B)\rangle\langle M_{jk}(u_B)|u_A},
\end{align*}
which is precisely \(\scalar{u_A}{P_A(u_B)u_A}=u_A^{\mathsf T}P_A(u_B)u_A\).
\end{proof}

\begin{remark}[Symmetric form for player $B$]\rm\,
Analogously, there exists a self-adjoint matrix \(P_B(u_A)\) such that
\[
\langle\$_B(u_A,u_B)\rangle = \scalar{u_B}{P_B(u_A)\,u_B}.
\]
The vectors \(N_{jk}(u_A)\) defining \(P_B(u_A)\) are obtained from the same functional expressions as the \(M_{jk}\) with the roles of \(u_A\) and \(u_B\) exchanged; in fact
\begin{equation}\label{N00_M00}
N_{00}(u_A)=M_{00}(u_A),\quad N_{01}(u_A)=M_{10}(u_A),
\end{equation}
\begin{equation}\label{N11_M11}
N_{10}(u_A)=M_{01}(u_A),\quad N_{11}(u_A)=M_{11}(u_A).
\end{equation}
\end{remark}

\begin{remark}[Nash equilibrium condition]\rm\,
A pair \((u_A^*,u_B^*)\) is a Nash equilibrium in pure strategies if and only if
\(u_A^*\) is an eigenvector of \(P_A(u_B^*)\) corresponding to its largest eigenvalue,
and simultaneously \(u_B^*\) is an eigenvector of \(P_B(u_A^*)\) corresponding to its largest eigenvalue.
Those eigenvalues are exactly the payoffs \(\langle\$_A\rangle\) and \(\langle\$_B\rangle\) at the equilibrium.
\end{remark}

The quadratic-form representation provides a compact analytic tool for studying equilibria: the search for a Nash equilibrium reduces to finding a pair of unit vectors that are mutually optimal eigenvectors of the payoff matrices induced by each other's strategy.

\subsection{Example: The Chicken game}
In this subsection, we demonstrate the usefulness of Theorem \ref{thm:quadratic} in calculating the players' payoffs when the scenario is asymmetric, i.e. player $A$ chooses classical strategies while player $B$ has access to the full $\SU(2)$. In particular, we prove that player $B$ can always improve their payoff by adopting quantum pure strategies.

 In this setting, let $u_A=(\cos\frac{\varphi}{2},\sin\frac{\varphi}{2},0,0)$ and 
\[
u_B =(\cos\alpha\cos\frac\theta2,\cos\beta\sin\frac\theta2, -\sin\beta\sin\frac\theta2,\sin\alpha\cos\frac\theta2)=(w,x,y,z)
\]
be the unit vectors in $\mathbb{R}^4$ corresponding to strategies $U_A=U(\varphi,0,0)$ and $U_B=U(\theta, \alpha, \beta)$, respectively. Using the scalar product expression for the payoff given by the Theorem \ref{thm:quadratic}, we get
\begin{align}
 \nonumber \langle\$_X(u_A,u_B)\rangle&= x_{00}\left[z^2\cos^2\frac{\varphi}{2}\cos^2\gamma+(w\cos\frac{\varphi}{2}-y\sin\frac{\varphi}{2}\sin\gamma)^2
  \right]\\
  \nonumber &+x_{01}\left[y^2\cos^2\frac{\varphi}{2}\cos^2\gamma+(z\sin\frac{\varphi}{2}\sin\gamma+x\cos\frac{\varphi}{2})^2
  \right]\\
  \nonumber &+x_{10}\left[z^2\sin^2\frac{\varphi}{2}\cos^2\gamma+(y\cos\frac{\varphi}{2}\sin\gamma+w\sin\frac{\varphi}{2})^2
  \right]\\
 \label{formula-payoff-asimm} &+x_{11}\left[y^2\sin^2\frac{\varphi}{2}\cos^2\gamma+(z\cos\frac{\varphi}{2}\sin\gamma-x\sin\frac{\varphi}{2})^2
  \right]
\end{align}
for $X=A,B$.

\begin{example}\rm
The Chicken game (also known as Hawk-Dove) models a conflict where each player can choose an aggressive (Hawk) or conciliatory (Dove) behaviour. Mutual aggression leads to a severe loss for both, while a unilateral aggressive move yields the highest individual payoff. Mutual conciliation gives a moderate, positive outcome. Classically, the game has two pure asymmetric Nash equilibria (one Hawk, one Dove) and one mixed equilibrium. The payoff matrix used here is standard in the literature.

Consider now the Chicken game with the following payoff matrix (player $A$ rows, player $B$ columns):
\[
\begin{array}{c|cc}
 & H & D \\
\hline
H & (-25,-25) & (50,0) \\
D & (0,50) & (15,15)
\end{array}
\]
Thus, for player $A$: $a_{00}=-25$, $a_{01}=50$, $a_{10}=0$, $a_{11}=15$;
for player $B$: $b_{00}=-25$, $b_{01}=0$, $b_{10}=50$, $b_{11}=15$.

Substituting these values into the general formula \eqref{formula-payoff-asimm} and simplifying, we obtain the expected payoffs for the two players. In particular, the difference between them is
\[
\begin{aligned}
\langle \$ _A \rangle - \langle \$ _B \rangle =&\; 50\big[\cos^2\!\gamma\big(y^2\cos^2\frac{\varphi}{2} - z^2\sin^2\frac{\varphi}{2}\big) \\
&+ \bigl(x\cos\tfrac{\varphi}{2} + z\sin\tfrac{\varphi}{2}\sin\gamma\bigr)^2 \\
&- \bigl(w\sin\tfrac{\varphi}{2}+y \cos\tfrac{\varphi}{2}\sin\gamma\bigr)^2].
\end{aligned}
\]

This expression shows that for suitable choices of the quantum strategy parameters $(w,x,y,z)$ (i.e. of $u_B$), the difference $\langle \$ _A \rangle - \langle \$ _B \rangle$ can be made non-positive. We now demonstrate this explicitly.

\begin{proposition}
For all $\gamma\in[0,\pi/2]$ and for every classical strategy $U_A=U(\varphi,0,0)$ chosen by $A$, there exists at least one quantum pure-strategy $U_B\in\SU(2)$ such that
\[
\langle \$ _A(U_A,U_B)\rangle \le \langle \$ _B(U_A,U_B)\rangle,
\]
with strict inequality except at $\varphi=0$ and $\gamma\leq\pi/4$.
\end{proposition}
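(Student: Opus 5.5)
The plan is to take the displayed formula for $\langle \$_A\rangle-\langle \$_B\rangle$ (obtained from \eqref{formula-payoff-asimm} with the Chicken payoffs) as the starting point. We then exhibit, case by case, an explicit unit vector $u_B=(w,x,y,z)$ that makes the difference non-positive. Write $c=\cos\frac{\varphi}{2}$ and $s=\sin\frac{\varphi}{2}$, with $\varphi\in[0,\pi]$, so that $c,s\ge 0$ and $c^2+s^2=1$. The difference then reads
\[
\tfrac{1}{50}\bigl(\langle \$_A\rangle-\langle \$_B\rangle\bigr)=\cos^2\gamma\,(y^2c^2-z^2s^2)+(xc+zs\sin\gamma)^2-(ws+yc\sin\gamma)^2 .
\]

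\emph{Case $\varphi>0$.} Choose $U_B=I$, i.e.\ $u_B=(1,0,0,0)$. All terms vanish except the last one, and the expression reduces to $-s^2$. Since $s>0$ when $\varphi\in(0,\pi]$, this gives the strict inequality for every $\gamma\in[0,\pi/2]$. In particular, the endpoint $\varphi=\pi$ needs no separate treatment.

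\emph{Case $\varphi=0$.} Here $c=1$ and $s=0$, and the expression collapses to
\[
x^2+y^2(\cos^2\gamma-\sin^2\gamma)=x^2+y^2\cos 2\gamma .
\]
If $\gamma>\pi/4$, then $\cos 2\gamma<0$. Choosing $u_B=(0,0,1,0)$ (for example $\theta=\pi$, $\beta=3\pi/2$) gives the value $\cos 2\gamma<0$, hence strict inequality. If $\gamma\le\pi/4$, then $\cos2\gamma\ge 0$ and the expression is $\ge 0$ for every $u_B$. Thus strict inequality is impossible, which accounts for the stated exception. Equality is nevertheless attained, for instance by $U_B=I$ (or any $u_B$ with $x=y=0$), so the non-strict inequality still holds.

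The only real work is checking the simplified difference formula, which the paper states after substituting into \eqref{formula-payoff-asimm}. I would verify it by subtracting the $X=B$ instance from the $X=A$ instance term by term, using $a_{00}=b_{00}$, $a_{11}=b_{11}$ and $a_{01}-b_{01}=50=-(a_{10}-b_{10})$. With the formula granted, the case analysis above is routine. The main point of care is the boundary case $\varphi=0$, $\gamma\le\pi/4$, where one must show that no choice of $u_B$ gives strict inequality, not merely that the chosen one fails. The nonnegativity of $x^2+y^2\cos2\gamma$ handles this.
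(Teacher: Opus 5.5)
Your proof is correct, but it handles $\varphi\neq 0$ differently from the paper. In your notation $c=\cos\frac{\varphi}{2}$, $s=\sin\frac{\varphi}{2}$.

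\textbf{The case $\varphi\neq 0$.} The paper splits this into two subcases.
\begin{itemize}
\item For $\gamma\neq\pi/2$ it uses the $\gamma$-dependent vector $u_B=(s\cos\gamma,\,s\sin\gamma,\,0,\,-c)$. This gives the difference $-50\cos^2\gamma\,s^2$, which degenerates to $0$ at $\gamma=\pi/2$.
\item For $\gamma=\pi/2$ it therefore needs a separate choice, $u_B=(1/\sqrt2,0,1/\sqrt2,0)$, which gives $-25(s+c)^2$.
\end{itemize}
Your single choice $u_B=(1,0,0,0)$ gives $-50s^2$ for every $\gamma$. This removes the case split and gives a gap at least as large as the paper's.

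\textbf{The case $\varphi=0$.} Here you follow the paper. You also supply the actual reason no $u_B$ can give strict inequality when $\gamma\le\pi/4$: the reduced expression $x^2+y^2\cos2\gamma$ is nonnegative. The paper only asserts this.

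\textbf{What each choice buys.} By Proposition~\ref{prop:classical} with $q=1$, $U_B=I$ is simply the classical pure strategy Hawk. So your argument shows that for $\varphi>0$ the inequality needs no quantum resource at all. Only the case $\varphi=0$, $\gamma>\pi/4$ requires a genuinely non-classical move, namely $\mi\sigma_y$.

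This is entirely adequate for the statement as written. However, your proof does not support the paper's surrounding claim that $B$ gains by adopting quantum strategies. The paper's Case~1 vector, with its nonzero $\sigma_z$-component, appears to be chosen with that narrative in mind, at the cost of the extra $\gamma=\pi/2$ case.
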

\begin{proof}

\textbf{Case 1: $\mathbf{\gamma\neq\pi/2}$ and $\mathbf{\varphi\neq 0}$.}
Choose $w=\sin\frac{\varphi}{2}\cos\gamma,\ x=\sin\frac{\varphi}{2}\sin\gamma,\ y=0$ and $z=-\cos\frac{\varphi}{2}$. Substituting into the difference
\begin{equation*}
\langle \$ _A \rangle - \langle \$ _B \rangle = -50\cos^2\gamma\sin^2\frac{\varphi}{2}\left(\cos^2\frac{\varphi}{2}+\sin^2\frac{\varphi}{2}\right) = -50\cos^2\gamma\sin^2\frac{\varphi}{2},
\end{equation*}
which is strictly smaller than zero if and only if $\gamma\neq\pi/2$ and $\varphi\neq 0$ .
Hence the quantum player $B$ obtains a strictly higher payoff.

{\textbf{Case 2:}} $\mathbf{\gamma = \pi/2}$.
We have 
\[
\langle \$ _A \rangle - \langle \$ _B \rangle = 50\left( \bigl(x\cos\tfrac{\varphi}{2} + z\sin\tfrac{\varphi}{2}\bigr)^2
- \bigl(w\sin\tfrac{\varphi}{2}+y \cos\tfrac{\varphi}{2}\bigr)^2\right),
\] so we can choose $x=0=z$ and $w=\frac{1}{\sqrt{2}}=y$ in order to have the difference strictly negative for all $\varphi\in[0,\pi]$.

{\textbf{Case 3: $\varphi = 0$ (player $A$ plays pure $H$)}}. The payoff difference reduces to
\[
\langle \$ _A \rangle - \langle \$ _B \rangle = 50\left( (1 - 2\sin^2\gamma)\, y^2 + x^2 \right).
\]
The coefficient $1 - 2\sin^2\gamma$ is positive for $\gamma < \pi/4$, zero for $\gamma = \pi/4$, and negative for $\gamma > \pi/4$.
Therefore, we have two subcases:
\begin{itemize}
  \item If $\gamma > \pi/4$, choose $u_B=(0,0,1,0)$, so that 
  $$\langle \$ _A \rangle - \langle \$ _B \rangle=50(1 - 2\sin^2\gamma)<0.
  $$
  \item If $\gamma \leq \pi/4$, it is not guaranteed that the difference is strictly negative, but we have $\langle \$ _A \rangle = \langle \$ _B \rangle$ by choosing $x = y = 0$. Hence, for $\gamma \le \pi/4$ and $\varphi = 0$, no choice of the quantum strategy yields a strictly negative difference.
\end{itemize}

Thus, in all cases we have constructed explicit quantum strategies for $B$ ensuring $\langle \$ _B \rangle \ge \langle \$ _A \rangle$, with strict inequality except possibly at $\varphi=0$. This completes the proof.
\end{proof}
\end{example}

\medskip

%{\acknowledgements The authors are members of GNAMPA-INdAM. The authors have been supported by the MUR grant \lq\lq Dipartimento di Eccellenza 2023--2027\rq\rq\, of Dipartimento di Matematica, Universit\`a di Genova.}

\vskip 1truecm

Authors' addresses:
\begin{itemize}
\item[$^{(1)}$] Department of Mechanical, Energy, Management and Transportation Engineering, University of Genoa, Via all'Opera Pia, 15 16145 Genova, Italy, email: gloria.ferraris@edu.unige.it
\item[$^{(2)}$] Mathematics Department, University of Genova,
Via Dodecaneso, I - 16146 Genova, Italy, email: veronica.umanita@unige.it
\end{itemize}


\begin{thebibliography}{9}

% Questo perche' appaia nell'indice:
\addcontentsline{toc}{chapter}{Bibliografia}

\bibitem{arandelovic2000}
{\textsc I. D. Arandelovi\'c},
A new extension of Kakutani's fixed point theorem,
{\it Annals of University of Timi\c{c}oara} {\bf 38}, No. 1 (2000).

\bibitem{benjamin}
{\textsc S. C. Benjamin}, Comment on: A quantum approach to static games of complete information, https://xxx.lanl.gov/abs/quant-
ph/0008127

\bibitem{Hayden}
{\textsc  S. C. Benjamin and P. M. Hayden}, 
Multiplayer quantum games, {\it Phys. Rev. A} {\bf 64}, 030301 (2001).

\bibitem{bleiler}
{\textsc S. A. Bleiler},
A Formalism for Quantum Games and an Application, \\
https://doi.org/10.48550/arXiv.0808.1389, (2008).

\bibitem{bolonek1}
{\textsc K. Bolonek-Laso\'{n}}, General quantum two-player games, their gate operators, and Nash equilibria, {\it Progress of Theoretical and Experimental Physics} {\bf 2015.2}, 023A03 (2015).

\bibitem{bolonek2}
{\textsc K. Bolonek-Laso\'{n} and P. Kosi\'{n}ski}, Mixed Nash equilibria in Eisert-Lewenstein-Wilkens (ELW) games, {\it J. Phys.: Conf. Ser.} {\bf 804}, 012007 (2017).

\bibitem{chappell}
{\textsc J. M. Chappell, A. Iqbal and D. Abbott}, $N$-player quantum games in an EPR setting, {\it  PLoS One} {\bf 7}(5), e36404 (2012).

\bibitem{das2023}
{\textsc S. Das},
Quantumizing Classical Games: An Introduction to Quantum Game Theory (2023).
https://doi.org/10.48550/arXiv.2305.00368


\bibitem{eisert_2009}
{\textsc J. Eisert and M. Wilkens},
Quantum games,
{\it Journal of Modern Optics} {\bf 47}, (14-15)  2543-2556 (2000).  https://doi.org/10.1080/09500340008232180.

\bibitem{eisert_quantum}
{\textsc J. Eisert, M. Wilkens, and M. Lewenstein},
Quantum games and quantum strategies,
{\it Phys. Rev. Lett.} {\bf 83}, 3077 (1999).

\bibitem{lee}
{\textsc C. F. Lee and N. F. Johnson},
Efficiency and formalism of quantum games, {\it Phys. Rev. A} {\bf 67}, 022311 (2003). 

\bibitem{flitney_abbott}
{\textsc A. P. Flitney and D. Abbott},
An introduction to quantum game theory,
{\it Fluctuation and Noise Letters} {\bf 2}, No. 4, R175--R187 (2002).

\bibitem{flitney_hollenberg}
{\textsc A. P. Flitney and L.C.L. Hollenberg},
Nash equilibria in quantum games with generalized two-parameter strategies,
{\it Physics Letters A} {\bf 363}, No. 5-6, 381-388 (2007).

\bibitem{landsburg}
{\textsc S.E. Landsburg}
Nash equilibria in quantum games, {\it Proc. Am. Math. Soc.} {\bf 139}, No. 12 , 4423-4434 (2011).

\bibitem{Jiangfeng_Du_2003}
{\textsc D. Jiangfeng, L. Hui, X. Xiaodong, Z. Xianyi and H. Rongdian}, Phase-transition-like behaviour of quantum games, {\it Journal of Physics A: Mathematical and General} {\bf 36}(23), 6551 (2003).

\bibitem{marinatto_weber}
{\textsc L. Marinatto and T. Weber}, A quantum approach to static games of complete information, {\it Physics Letters A} {\bf 272}, 291-303 (2000).

\bibitem{meyer_penny}
{\textsc D. A. Meyer},
Quantum strategies,
{\it Phys. Rev. Lett.} {\bf 82}, 1052 (1999).

\bibitem{nash_equilibrium}
{\textsc J. F. Nash},
Equilibrium points in $n$-person games,
{\it Proceedings of the National Academy of Sciences} {\bf 36}(1), 48-49 (1950).

\bibitem{von_neumann_morgenstern}
{\textsc J. von Neumann and O. Morgenstern},
{\it Theory of Games and Economic Behavior},
Princeton University Press (1944).
\end{thebibliography}
\end{document}